\def\univs{\mathbb{U}}
\newcommand{\univ}[1]{\univs_{\mathit{#1}}}
\newcommand{\pim}[1]{\pi_{\mathit{#1}}}
\begin{document}
\setcounter{page}{1}

\title{Discovering Object-Centric Petri Nets}

\author{Wil M.P. van der Aalst \and Alessandro Berti\\
Process and Data Science (PADS), RWTH Aachen University, Aachen, Germany \\
Fraunhofer Institute for Applied Information Technology, Sankt Augustin, Germany\\
wvdaalst,a.berti{@}pads.rwth-aachen.de}

\runninghead{W.M.P. van der Aalst and A. Berti}{Discovering Object-Centric Petri Nets}
\maketitle

\begin{abstract}
Techniques to discover Petri nets from event data assume precisely one case identifier per event.
These case identifiers are used to correlate events, and the resulting discovered Petri net aims to describe the life-cycle of individual cases.
In reality, there is not one possible case notion, but multiple intertwined case notions.
For example, events may refer to mixtures of orders, items, packages, customers, and products.
A package may refer to multiple items, multiple products, one order, and one customer.
Therefore, we need to assume that each event refers to a collection of objects, each having a type (instead of a single case identifier).
Such \emph{object-centric event logs} are closer to data in real-life information systems.
From an object-centric event log, we want to discover an \emph{object-centric Petri net} with places that correspond to object types and
transitions that may consume and produce collections of objects of different types.
Object-centric Petri nets visualize the complex relationships among objects from different types.
This paper discusses a novel process discovery approach implemented in PM4Py.
As will be demonstrated, it is indeed feasible to discover holistic process models
that can be used to drill-down into specific viewpoints if needed.
\end{abstract}

\begin{keywords}
Process mining, Petri nets, Process discovery, Multiple viewpoint models
\end{keywords}

\section{Introduction}
\label{sec:intro}

The synthesis of ``higher-level'' process models from ``lower-level'' behavioral specifications has been subject of active research for decades.
Examples of such ``higher-level'' process models are (colored) Petri nets, BPMN models, Statecharts, etc.
Examples of ``lower-level'' behavioral specifications serving as input for synthesis are transition systems, languages, partial orders, and scenarios.
In the context of Petri nets, the ``Theory of Regions'' has been very influential.
Regions were introduced for elementary nets and transition systems in the seminal paper \cite{ehrenfeucht_regions}.
The goal was to create a Petri net with a reachability graph that is isomorphic to the transition system used as input. The core idea has been generalized in numerous directions.
Different classes of target models have been investigated \cite{Badouel-Darondeau-book-regions-2015,BaDa98,DeRe96,Jetty-TCS-2017}, e.g., bisimilar Place Transition (P/T) nets \cite{Cortadella98}, Petri nets with arc weights \cite{DBLP:conf/apn/CarmonaCKKLY08,DBLP:journals/tc/CarmonaCK10}, Petri nets with  a/sync connections \cite{Jetty-TCS-2012}, $\tau$-nets \cite{Jetty-TCS-2017}, zero-safe nets \cite{zero-safe-nets-synthesis}, etc.
Typically, a transition system is used as input.
However, there are various region-based approaches taking as input
languages \cite{lorenz_BPM2007,Lorenz_ACSD_07,DBLP:journals/fuin/BergenthumDLM08,boudewijn_runs_ToPNoC-special-issue-PN2011,DBLP:journals/fuin/BergenthumDML09,Lorenz_WSC_07}, partial orders/scenarios \cite{lorenz_atpn_2006,DBLP:conf/apn/BergenthumDLM08,models-from-scenarios-ToPNoC}, or other ``lower-level'' behavioral specifications.

\emph{Process mining} is related to the field of synthesis (in particular language-based regions). 
However, the assumptions and goals are very different. 
Whereas classical synthesis approaches aim to obtain a
``higher-level'' process model that compactly describes the behavior of a ``lower-level'' behavioral specification, process mining techniques face a more difficult problem.
The event logs used as input for process discovery typically contain only a fraction of the possible behavior. Traces in an event log can be seen as examples. If there are loops,
one cannot expect to see all possible traces.
If a model contains concurrency, one cannot expect to see all possible interleavings.
If the model has multiple choices, one cannot expect to witness all possible combinations.
There have been many attempts to extend region-based approaches to this setting \cite{DBLP:conf/bpm/CarmonaCK08,lorenz_BPM2007,carmona-PN2010,bas-ilp-computing}.
Unfortunately, region-based techniques are often computationally intractable, lead to overfitting models, and/or cannot discover process constructs such as skipping and mixtures of choice and synchronization (e.g., OR-joins).
Hence, several more scalable and robust techniques have been developed.
Commercial tools typically still resort to learning the so-called \emph{Directly Follows Graph} (DFG) which typically leads to underfitting process models \cite{centeris-keynote2019}. When activities appear out of sequence, loops are created, thus leading to Spaghetti-like diagrams suggesting repetitions that are not supported by the data.
The inductive mining techniques \cite{sander-infreq-bpi2013-lnbip2014,sander-scalable-BPMDS2015} and the so-called split miner \cite{split-miner} are examples of the state-of-the-art techniques to learn process models.
These techniques are able to generalize and uncover concurrency.
\begin{figure}[htb!]
\centerline{\includegraphics[width=13cm]{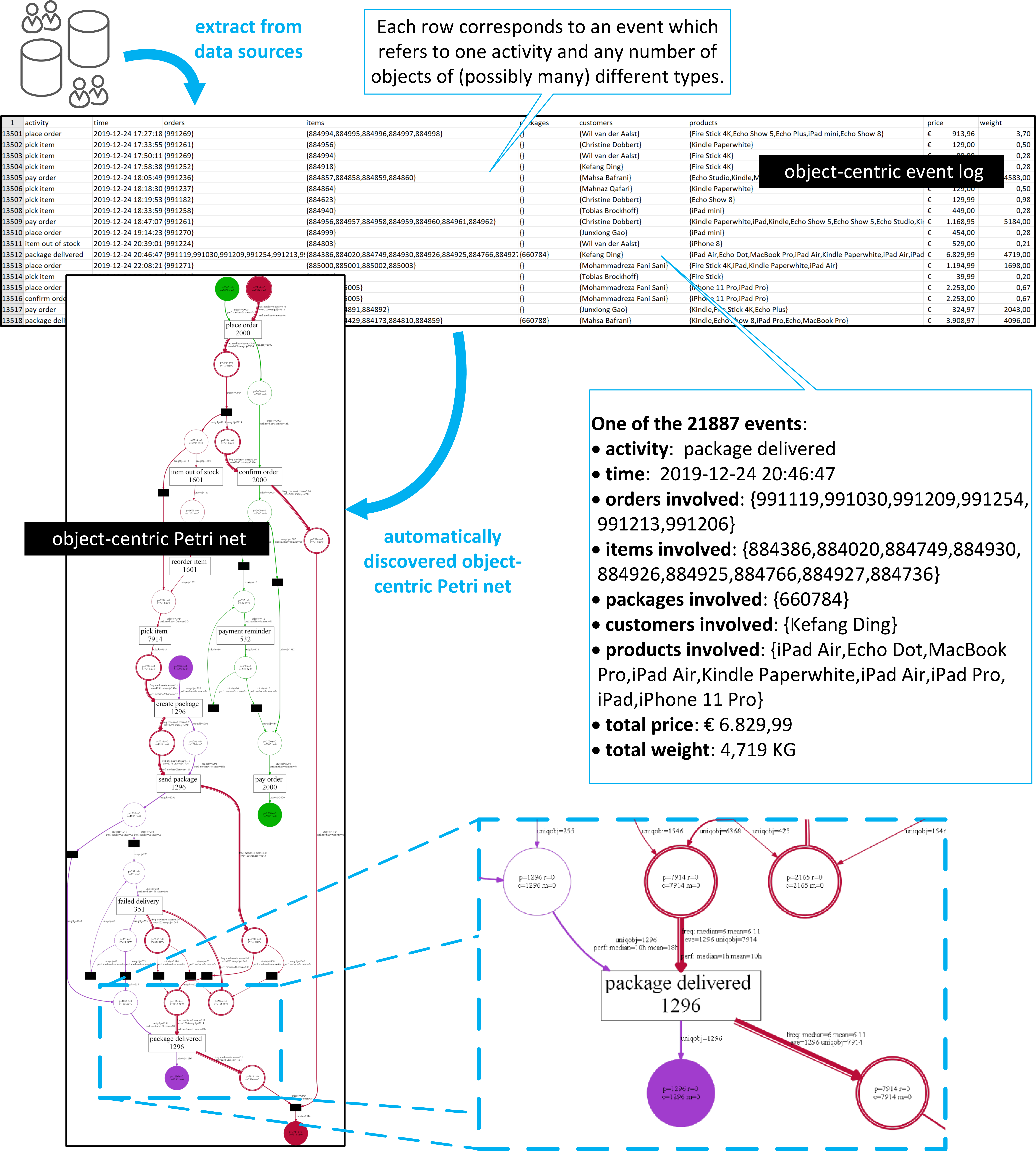}}
\caption{Overview of the approach presented in this paper. Object-centric event logs are used as an intermediate format in between the actual data sources and traditional event logs requiring a single case notion. Using this input, we discover object-centric Petri nets that are able to describe multiple object types in a single model.}
\label{f-intro}
\end{figure}

This paper focuses on process discovery. However, rather than presenting a new discovery technique for traditional event logs, we start from \emph{object-centric event logs} \cite{wvda-keynote-SEFM2019}.
In a traditional event log each event is related to one activity, one timestamp, and one case (i.e., a process instance). We still make the assumption that each event refers to an activity and a point in time.
However, we do \emph{not} assume the existence of a single case notion. \emph{Instead, an event may refer to any number of objects and these objects may be of different types.}
This extends the reach of process mining dramatically. The step is comparable to going from Place Transition (P/T) nets to colored Petri nets.
\emph{Objects can be viewed as colored tokens and object types can be seen as color sets (i.e.,  place types).}

\emph{Based on object-centric event logs, we aim to automatically discover object-centric Petri nets}.
Such Petri nets have typed places that refer to the object types in the event log.
Just like in colored Petri nets, a transition may consume or produce \emph{multiple} tokens from a place during one execution.
In this paper, we present the first technique to discover such nets.
In the related work section (Section~\ref{sec:rw}), we elaborate on the relation to earlier approaches such as the Object-Centric Behavioral Constraint (OCBC) models \cite{BIS-OCBCdisc-lnbip2017},
synchronized transitions systems \cite{DBLP:conf/wecwis/EckSA17,Multi-instance-Mining-BPM-WS-2018}, and artifact-centric discovery approaches \cite{BIS-artifactconformance-lnbip2011,zeus-artifact-2011,Xixi-TSC-2015}.

Figure~\ref{f-intro} illustrates the approach presented.
Object-centric event logs can be extracted from any information system \cite{wvda-keynote-SEFM2019}. These logs can be seen as an intermediate format closer to the actual data collected by today's information systems.
Unlike traditional event logs (e.g., XES logs), an event may refer to multiple objects and is not forced to be assigned to a single case.
Enterprise Information Systems (EIS), Customer Relationship Management (CRM) systems, Healthcare Information Systems (HIS), E-Learning Systems, Production Systems, Supply Chain Systems, etc.\
typically store information on a range of objects (customer, orders, patients, products, payments, etc.) in multiple tables that refer to each other.
Figure~\ref{f-intro} shows the objects related to one ``package delivered'' event. The event refers to six orders, nine items, one package, one customer, and nine products.
In total, there are 22,367 events. Figure~\ref{f-intro} shows an object-centric Petri net (not intended to be readable) discovered while focusing on orders, items, and packages.
The places and arcs are typed. The colors red, green, and purple refer to respectively orders, items, and packages.

Just like for traditional process mining approaches it is possible to filter and seamlessly simplify the process model. By focusing on a particular object type, it is also
possible to create traditional events logs that can be analyzed using traditional process mining techniques.

The remainder of this paper is organized as follows.
Section~\ref{sec:prelim} introduces event logs and process models.
Object-centric event logs are introduced and motivated in Section~\ref{sec:ocelogs}.
Given such logs, we first discuss techniques to learn process models for a single object type in Section~\ref{sec:discot}.
In Section~\ref{sec:ocpn}, we introduce object-centric Petri nets, i.e., Petri nets with places referring to object types.
Section~\ref{sec:discoopn} presents the main contribution of this paper: An approach to learn object-centric Petri nets from object-centric event logs.
The discovery technique has been implemented in \emph{PM4Py}, an open-source process mining platform written in Python.
Section~\ref{sec:impl} presents the implementation and Section~\ref{sec:appl} demonstrates the feasibility of the approach.
Related work is discussed in Section~\ref{sec:rw}.
Section~\ref{sec:concl} concludes the paper with a few final remarks.

\section{Preliminaries}
\label{sec:prelim}

First, we introduce some preliminaries for people not familiar with process mining and accepting Petri nets.
Input for process mining is an event log.
A \emph{traditional} event log views a process from a particular angle provided by the \emph{case notion} that is used to \emph{correlate events}.
Each event in such an event log refers to (1) a particular \emph{process instance} (called \emph{case}), (2) an \emph{activity}, and (3) a \emph{timestamp}.
There may be additional event attributes referring to resources, people, costs, etc., but these are optional.
With some effort, such data can be extracted from any information system supporting operational processes.
Process mining uses these event data to answer a variety of process-related questions.
Process mining techniques such as process discovery, conformance checking, model enhancement, and operational support can be used to improve performance and compliance \cite{process-mining-book-2016}.

Each event in an event log has three \emph{mandatory} attributes: 
case, activity, and timestamp.
The case notion is used to group events, e.g., all events corresponding to the same order number are taken together. The timestamps are used to order the events and can be used to analyze bottlenecks, delays, etc. There may be many additional attributes, e.g., costs, resource, location, etc.
However, most process discovery techniques first learn a model where 
only the order of activities within cases matters.
Once the control-flow is clear, other attributes (e.g., time) can be added by replaying the event log on the model \cite{process-mining-book-2016}.
Therefore, we define a so-called ``simple event log'' that only records the ordering of activities for each case.
Technically, an event log is a multiset of traces. $B \in \bag(X) = X \rightarrow \Nat$ is a multiset over $X$ where element $x \in X$ appears $B(x)$ times. For example, in $B=[a^5,b^2,c]$, $a$ appears $B(a)=5$ times, $b$ twice, and $c$ once.

\begin{definition}[Simple Event Log]\label{def:sel}
Let $\univ{act}$ be the universe of activity names.
A trace $\sigma \in \univ{act}^*$ is a sequence of activities.
$L\in \bag(\univ{act}^*)$ is an event log, i.e., a multiset of traces.
$\univ{SEL} = \bag(\univ{act}^*)$ is the universe of simple event logs.
\end{definition}

For example, 
$\univ{act} = \{ \mi{po}, \mi{pi}, \mi{sh}, \mi{in}, \mi{sr}, \mi{pa}, \mi{co}, \ldots  \}$ where
$\mi{po}$ denotes activity \emph{place order},
$\mi{pi}$ denotes activity \emph{pick item},
$\mi{sh}$ denotes activity \emph{ship item},
$\mi{in}$ denotes activity \emph{send invoice},
$\mi{sr}$ denotes activity \emph{send reminder},
$\mi{pa}$ denotes activity \emph{pay order}, and
$\mi{co}$ denotes activity \emph{mark as completed}.
Using this more compact notation we show three example traces:
$\sigma_1 = \langle \mi{po},\allowbreak \mi{in},\allowbreak \mi{pi},\allowbreak \mi{sr},\allowbreak \mi{sh},\allowbreak \mi{pa},\allowbreak \mi{co} \rangle$,
$\sigma_2 = \langle \mi{po},\allowbreak \mi{pi},\allowbreak\mi{sh},\allowbreak \mi{in},\allowbreak \mi{pa},\allowbreak \mi{co} \rangle$, and
$\sigma_3 = \langle \mi{po},\allowbreak \mi{in},\allowbreak \mi{sr},\allowbreak \mi{sr},\allowbreak \mi{pi},\allowbreak \mi{sr},\allowbreak \mi{pa},\allowbreak \mi{sh},\allowbreak \mi{co} \rangle$.
Obviously, multiple cases can have the same trace.
In an event log $L= [ {\sigma_1}^{435}, {\sigma_2}^{366}, {\sigma_3}^{233}, \ldots  ]$ the above three traces appear respectively 435, 366, and 233 times.
Given such an event log, process discovery techniques are able to learn 
a process model describing the observed traces.
Such techniques often take into account frequencies, e.g., the model should cover the most frequent traces but may decide to abstract from infrequent ones.
Figure~\ref{f-example-pm-1} shows a process model discovered for event log $L$.
\begin{figure}[thb]
\centerline{\includegraphics[width=7cm]{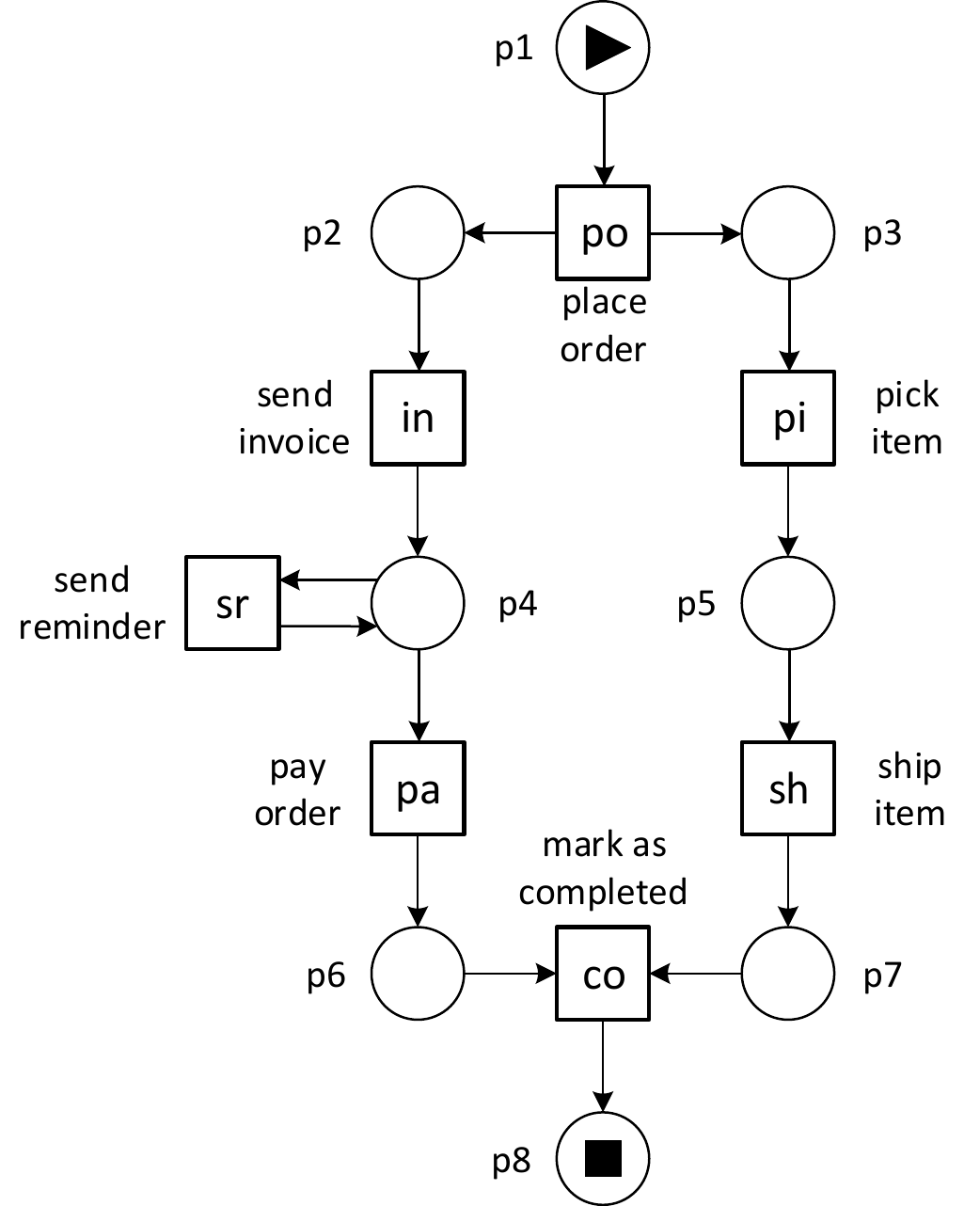}}
\caption{An accepting Petri net composed of eight places and seven transitions.}
\label{f-example-pm-1}
\end{figure}

The discovered process model in Figure~\ref{f-example-pm-1} is represented as an \emph{accepting} Petri net where the transitions are \emph{labeled}.
We assume that the reader is familiar with standard Petri nets notations, but provide a few definitions to make the key notions explicit. 
We use Petri nets with a \emph{labeling function} and \emph{final marking}.
This is driven by requirements from process mining. The labeling function is needed to model skips and duplicate activities. The final marking is needed because traces 
have a defined start and end.

\begin{definition}[Labeled Petri Net]\label{def:lpn}
A labeled Petri net is a tuple $N=(P,T,F,l)$ with $P$ the set of places, $T$ the set of transitions,
$P \cap T = \emptyset$, $F\subseteq (P \times T) \cup (T \times P)$ the flow relation, and $l \in T \not\rightarrow \univ{act}$  a labeling function.
\end{definition}

A Petri net defines a directed graph with nodes $P\cup T$ and edges $F$.
The state of a Petri net, called \emph{marking}, is a multiset of places ($M \in \bag(P)$).
A transition $t \in T$ is \emph{enabled} in marking $M$ of net $N$ if each of its input places $\pre t = \{p \in P\mid (p,t) \in F\}$ contains at least one token.
An enabled transition $t$ may \emph{fire}, i.e., one token is removed from each of the input places $\pre t$ and
one token is produced for each of the output places $\post t = \{p \in P\mid (t,p) \in F\}$. Assume that $[\mi{p1}]$ is the initial marking of the Petri net in Figure~\ref{f-example-pm-1}.
There are 11 markings reachable from this initial marking, including $[\mi{p1}]$, $[\mi{p2},\mi{p3}]$, $[\mi{p4},\mi{p7}]$, and $[\mi{p8}]$.

Note that the labeling function $l$ may be partial and non-injective.
This means that multiple transitions may refer to the same activity and that there may be transitions that are ``silent'' and do not correspond to an activity.
Any firing sequence of a labeled Petri net corresponds to a visible trace obtained by mapping transitions onto activities using $l$.
Firing an unlabeled transition does not add an activity to the trace. In Figure~\ref{f-example-pm-1}, all transitions are visible and unique.
$\sigma_1$, $\sigma_2$, and $\sigma_3$ are examples of visible traces (assuming the short names as activity labels).

For process mining, we often focus on so-called \emph{accepting Petri nets} that have an initial marking and a final marking.
The reason is that we want to have a model that defines a language corresponding to the process that was used to produce the event log.

\begin{definition}[Accepting Petri Net]\label{def:apn}
An accepting Petri net is a triplet $\mi{SN}=(N,M_{\mi{init}},M_{\mi{final}})$ where
$N=(P,T,F,l)$ is a labeled Petri net,
$M_{\mi{init}} \in \bag(P)$ is the initial marking, and $M_{\mi{final}} \in \bag(P)$ is the final marking.
$\univ{APN}$ is the universe of accepting Petri nets.
\end{definition}

In Figure~\ref{f-example-pm-1}, the initial marking  $M_{\mi{init}} = [\mi{p1}]$ and the final marking $M_{\mi{final}} = [\mi{p8}]$ are denoted using the start and stop symbol.

\begin{definition}[Language of an Accepting Petri Net]\label{def:lang}
An accepting Petri net $\mi{SN}=(N,M_{\mi{init}},M_{\mi{final}})$ defines a language
$\phi(\mi{SN})$ that is composed of all \emph{visible} traces (ignoring transition occurrences not having a label) starting in $M_{\mi{init}}$ and ending in $M_{\mi{final}}$.
\end{definition}

The accepting Petri net depicted in Figure~\ref{f-example-pm-1} has infinitely many visible traces due to the loop involving $\mi{sr}$.
Without the loop, there would be six possible visible traces.

Assuming the basic setting with simple event logs and accepting Petri nets, we can now formally define the notion of \emph{process discovery}.
For any event log, we would like to construct a corresponding process model.

\begin{definition}[Process Discovery Technique]\label{def:pd}
Discovery technique $\mi{disc}$ is a function mapping simple event logs onto accepting Petri nets, i.e., $\mi{disc} \in \univ{SEL} \rightarrow \univ{APN}$.
\end{definition}

What makes process mining very difficult is that the event log only contains example behaviors.
If Figure~\ref{f-example-pm-1} represents the real process, we have the problem that no event log will contain all of its traces (due to the loop).
Even when there are no loops, it is very unlikely to observe all possible traces for real-life processes due to combinations of choices and the interleaving of concurrent activities.
Typically, only a fraction of the possible process is observed.
Moreover, the event log may contain noise and infrequent behaviors that should not end up in the process model.
This leads to notions such as recall (also called fitness), precision, generalization, and simplicity \cite{process-mining-book-2016}.
These are outside of the scope of this paper. However, we abstractly define the notion of \emph{conformance checking}.

\begin{definition}[Conformance Checking Technique]\label{def:cc}
Conformance checking technique $\mi{conf}$ is a function mapping a pair composed of an event log and an accepting Petri nets onto conformance diagnostics, i.e., $\mi{conf} \in (\univ{SEL} \times \univ{APN}) \rightarrow \univ{diag}$.
\end{definition}

$\mi{conf}(L,\mi{SN}) \in \univ{diag}$ provides diagnostics related to recall, precision, generalization, simplicity, etc.
An example would be the fraction of traces in the event log that can be replayed by the accepting Petri net: $\mi{conf}(L,\mi{SN}) = \card{[\sigma \in L \mid \sigma \in \phi(\mi{SN})]}/\card{L}$.
Given $L' = [
\langle \mi{po},\allowbreak \mi{pi},\allowbreak\mi{sh},\allowbreak \mi{in},\allowbreak \mi{pa},\allowbreak \mi{co} \rangle^8, \langle \mi{po},\allowbreak\mi{sh},\allowbreak \mi{pi},\allowbreak \mi{in},\allowbreak \mi{pa},\allowbreak \mi{co} \rangle^2]$ and
$\mi{SN}$ shown in Figure~\ref{f-example-pm-1}, $\mi{conf}(L',\mi{SN}) = 0.8$ given this conformance notion.
Many other measures and diagnostics are possible. However, we leave $\univ{diag}$ deliberately vague.

\section{Object-Centric Event Logs}
\label{sec:ocelogs}

Section~\ref{sec:prelim} provided a basic introduction to process mining, assuming that there is a clear case notion.
In this section, we show that, for many applications, this assumption is not realistic (Section~\ref{subsec:nocase}).
Next, we formalize the notion of \emph{object-centric event logs} (Section~\ref{subsec:formallog}).

\subsection{What If There is not a Single Case Identifier?}
\label{subsec:nocase}

In many applications, there are multiple candidate case notions leading to different views on the same process \cite{wvda-keynote-SEFM2019}.
Moreover, one event may be related to different cases (\emph{convergence}) and, for a given case, there may be multiple instances of the same activity within a case (\emph{divergence}).
To create a traditional process model, the event data need to be ``flattened''. There are typically multiple choices possible, leading to different views that are disconnected or inconsistent.

To introduce the problem, consider the event log shown in Table~\ref{tablogtwoobjecttypes}.
The table shows that each order may correspond to multiple items 
that are picked and shipped separately.
This is a more realistic assumption (shops tend to allow customers to buy more than one product per order).
\begin{table}[htb!]
\caption{A fragment of an event log: Each line corresponds to an event, possibly referring to multiple objects (i.e., orders and items).}\label{tablogtwoobjecttypes}
\centering
\resizebox{0.75\columnwidth}{!}{\begin{tabular}{|c|c|c|c|}
\hline
activity & timestamp & order & item \\ \hline
 $\ldots$ &  $\ldots$ &  $\ldots$ &  $\ldots$\\
\emph{place order} & 25-11-2019:09.35 & $\{99001\}$ & $\{88124,88125,88126\}$\\
\emph{pick item} & 25-11-2019:10.35 & $\emptyset$ & $\{88126\}$ \\
\emph{place order} & 25-11-2019:11.35 & $\{99002\}$ & $\{88127,88128\}$\\
\emph{pick item} & 26-11-2019:010.25 & $\emptyset$ & $\{88124\}$ \\
\emph{send invoice} & 27-11-2019:08.12 & $\{99001\}$ & $\emptyset$\\
\emph{send invoice} & 28-11-2019:09.35 & $\{99002\}$ & $\emptyset$\\
\emph{pick item} & 29-11-2019:09.35 & $\emptyset$ & $\{88127\}$ \\
\emph{send reminder} & 29-11-2019:10.35 & $\{99002\}$ & $\emptyset$\\
\emph{pick item} & 29-11-2019:11.15 & $\emptyset$ & $\{88128\}$ \\
\emph{ship item} & 29-11-2019:12.35 & $\emptyset$ & $\{88124\}$ \\
\emph{pick item} & 29-11-2019:13.30 & $\emptyset$ & $\{88125\}$ \\
\emph{send reminder} & 29-11-2019:14.35 & $\{99001\}$ & $\emptyset$\\
\emph{ship item} & 29-11-2019:15.15 & $\emptyset$ & $\{88125\}$ \\
\emph{send reminder} & 29-11-2019:16.15 & $\{99002\}$ & $\emptyset$\\
\emph{ship item} & 29-11-2019:17.45 & $\emptyset$ & $\{88126\}$ \\
\emph{ship item} & 29-11-2019:18.00 & $\emptyset$ & $\{88128\}$ \\
\emph{send reminder} & 30-11-2019:09.35 & $\{99002\}$ & $\emptyset$\\
\emph{ship item} & 30-11-2019:10.05 & $\emptyset$ & $\{88127\}$ \\
\emph{pay order} & 30-11-2019:11.45 & $\{99002\}$ & $\emptyset$\\
\emph{pay order} & 30-11-2019:12.55 & $\{99001\}$ & $\emptyset$\\
\emph{mark as completed} & 01-12-2019:09.35 & $\{99001\}$ & $\{88124,88125,88126\}$\\
\emph{place order} & 02-12-2019:10.40 & $\{99003\}$ & $\{88129\}$\\
\emph{mark as completed} & 04-12-2019:11.05 & $\{99002\}$ & $\{88127,88128\}$\\
\emph{place order} & 06-12-2019:14.18 & $\{99004\}$ & $\{88130,88131,88132,88133,88134\}$\\
 $\ldots$ &  $\ldots$ &  $\ldots$ & $\ldots$\\
 \hline
\end{tabular}}
\end{table}

Table~\ref{tablogtwoobjecttypes} has a column for order identifiers and item identifiers.
Order 99001 corresponds to three items (88124, 88125, and 88126),
order 99002 corresponds to two items (88127 and 88128),
order 99003 corresponds to one item (88129),  and
order 99004 corresponds to five items (88130, 88131, 88132, 88133, and 88134).
The pick and ship activities are executed for individual items.
An order is marked as completed when all items have been picked and shipped and the order itself was paid.
Note that the events \emph{place order} and \emph{mark as completed} for order 99001, both refer to four objects (one order and three items).
The latter number is variable. For example, the event \emph{place order} for order 99003 refers to only two objects.
This cannot be expressed using the accepting Petri nets introduced before. Transitions need to consume and produce a variable number of tokens of different types.
Therefore, we propose to use \emph{object-centric Petri nets}. Note that we do \emph{not} propose such nets as a new \emph{modeling} language. It can be viewed as a subclass of colored Petri nets, but our focus is on learning a model describing the data in Table~\ref{tablogtwoobjecttypes}.
Hence, we limit the modeling notation to what can be discovered for such data.
\begin{figure}[htb]
\centerline{\includegraphics[width=7cm]{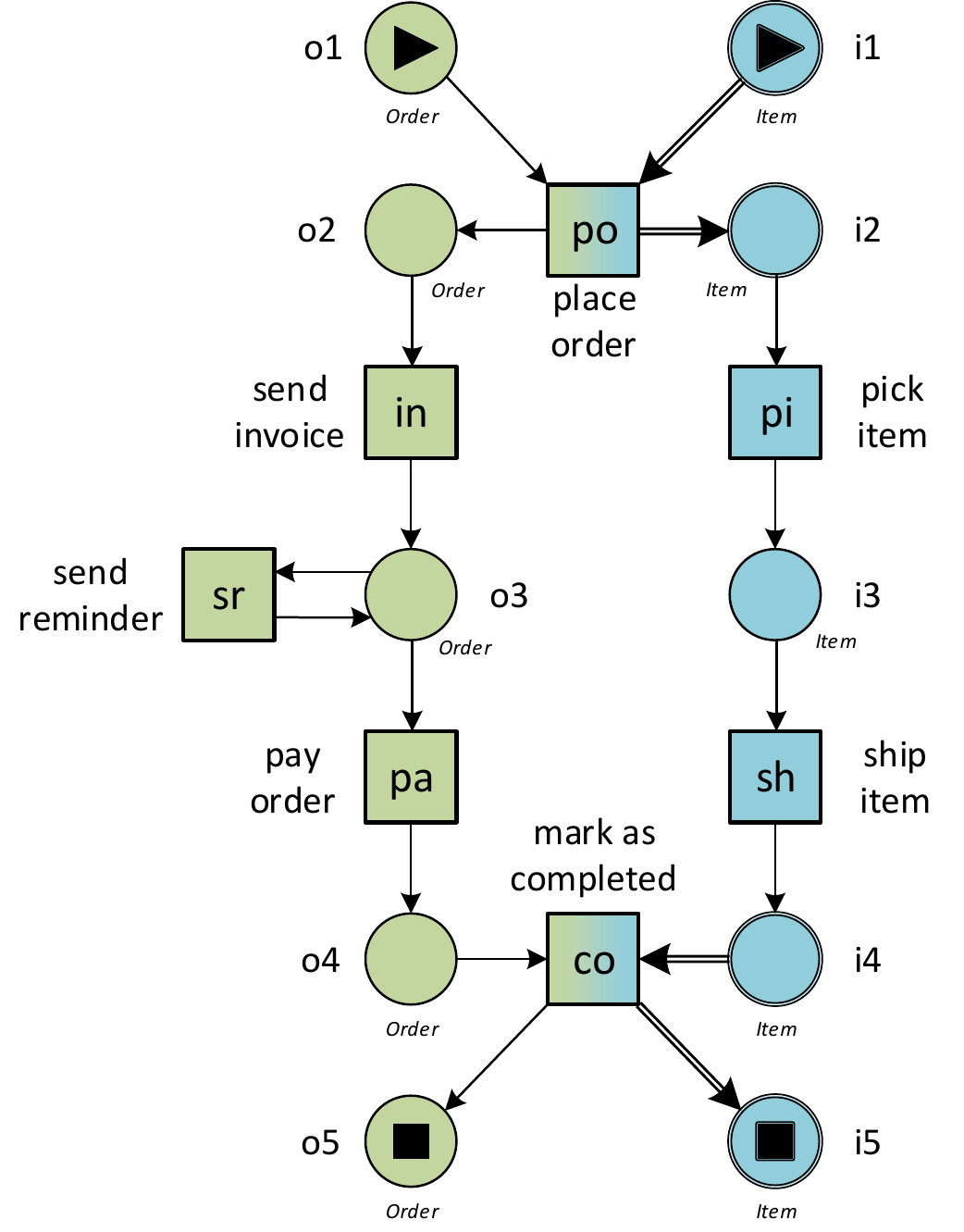}}
\caption{An object-centric Petri nets with two object types: $\mi{Order}$ and $\mi{Item}$.}
\label{f-example-pm-2}
\end{figure}

Figure~\ref{f-example-pm-2} shows the object-centric Petri net we want to discover based on the event data in Table~\ref{tablogtwoobjecttypes}.
There are now two types of places: the places that correspond to orders (colored green) and the places that correspond to items (colored blue).
Transitions are colored based on the object types they refer to. Note that transitions $\mi{po}$ and $\mi{co}$ have two colors.
A transition may consume multiple tokens from a place or produce multiple tokens for a place.
The places and arcs involved in events that consume or produce multiple objects have compound double arrows to highlight this.
Transition $\mi{po}$ in Figure~\ref{f-example-pm-2} consumes one order object from place $\mi{o1}$ and a variable number of items from place $\mi{i1}$.
$\mi{po}$ produces one order object for place $\mi{o2}$ and a variable number of items for place $\mi{i2}$.
Transition $\mi{pi}$ in Figure~\ref{f-example-pm-2} consumes one item object from place $\mi{i2}$ and produces one item object for place $\mi{i3}$.
The items are also shipped individually.
However, transition $\mi{co}$ in Figure~\ref{f-example-pm-2} consumes one order object from place $\mi{o4}$ and all items corresponding to the order from place $\mi{i4}$.
\begin{table}[htb!]
\caption{A small fragment of a simple event log with three types of objects.}\label{tablogthreeobjecttypes}
\centering
\resizebox{0.8\columnwidth}{!}{\begin{tabular}{|c|c|c|c|c|}
\hline
activity & timestamp & order & item & route\\ \hline
 $\ldots$ &  $\ldots$ &  $\ldots$ &  $\ldots$ &  $\ldots$\\
\emph{place order} & 25-11-2019:09.35 & $\{99001\}$ & $\{88124,88125,88126\}$ & $\emptyset$\\
\emph{place order} & 25-11-2019:11.35 & $\{99002\}$ & $\{88127,88128\}$ & $\emptyset$\\
 $\ldots$ &  $\ldots$ &  $\ldots$ &  $\ldots$ &  $\ldots$\\
\emph{start route} & 25-11-2019:11.35 & $\emptyset$ & $\{88124,88127\}$ & $\{66222\}$\\
\emph{end route} & 25-11-2019:11.35 & $\emptyset$ & $\{88124,88127\}$ & $\{66222\}$\\
 $\ldots$ &  $\ldots$ &  $\ldots$ &  $\ldots$ &  $\ldots$\\
\emph{start route} & 25-11-2019:11.35 & $\emptyset$ & $\{88125,88126,88128\}$ & $\{66223\}$\\
\emph{end route} & 25-11-2019:11.35 & $\emptyset$ & $\{88125,88126,88128\}$ & $\{66223\}$\\
 $\ldots$ &  $\ldots$ &  $\ldots$ &  $\ldots$ &  $\ldots$\\
\emph{mark as completed} & 01-12-2019:09.35 & $\{99001\}$ & $\{88124,88125,88126\}$& $\emptyset$\\
\emph{mark as completed} & 04-12-2019:11.05 & $\{99002\}$ & $\{88127,88128\}$& $\emptyset$\\
 $\ldots$ &  $\ldots$ &  $\ldots$ &  $\ldots$ &  $\ldots$\\
 \hline
\end{tabular}}
\end{table}

Although existing discovery techniques cannot handle the event data in Table~\ref{tablogtwoobjecttypes}, this is still a relatively simple scenario since there is a one-to-many relationship between orders and items.
In real-life applications, there may also be many-to-many relationships.
To illustrate this, consider the event log fragment depicted in Table~\ref{tablogthreeobjecttypes} where we added routes.
On any particular route, multiple items can be delivered.
The \emph{ship item} activity is now replaced by the \emph{start route} and \emph{end route} activities that may refer to items from different orders.
As shown in Table~\ref{tablogthreeobjecttypes}, route 66222 refers to two items (88124 and 88127) belonging to orders 99001 and 99002.
Route 66223 refers to three items (88125, 88126, and 88128) belonging to orders 99001 and 99002.
Again it is obvious that this cannot be modeled using traditional process models that assume a single case notion.
\begin{figure}[htb]
\centerline{\includegraphics[width=10cm]{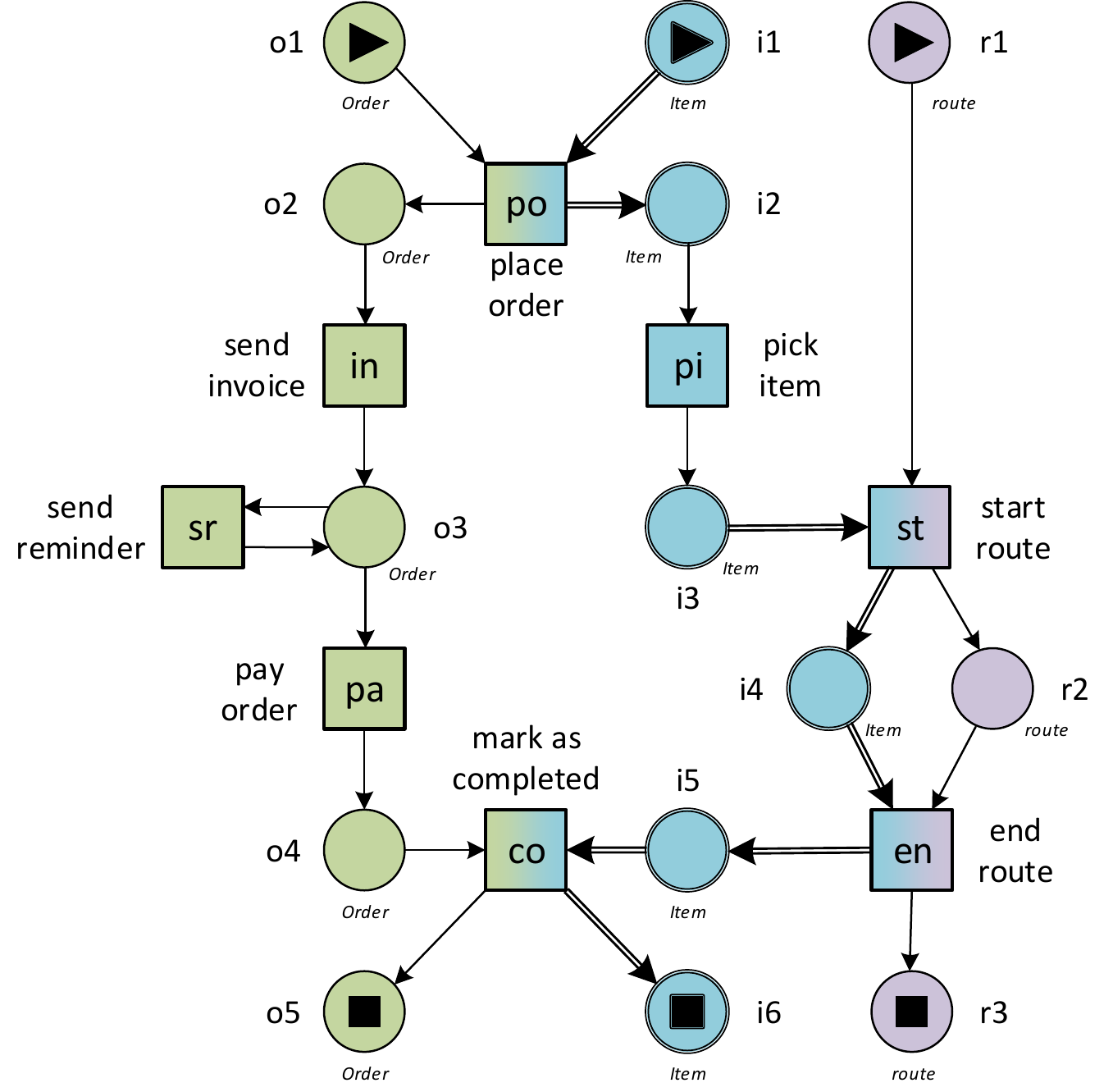}}
\caption{An object-centric Petri nets with three object types: $\mi{Order}$, $\mi{Item}$, and $\mi{Route}$.}
\label{f-example-pm-3}
\end{figure}

Figure~\ref{f-example-pm-3} shows the object-centric Petri net discovered from the event log referred to by Table~\ref{tablogthreeobjecttypes}.
There are now three types of places:  \emph{order} places (colored green), \emph{item} places (colored blue), and \emph{route} places (purple).
Transition $\mi{st}$ in Figure~\ref{f-example-pm-3} consumes a variable number of item objects from place $\mi{i3}$ and one route object from place $\mi{r1}$.
$\mi{st}$ produces a variable number of item objects for place $\mi{i4}$ and one route object for place $\mi{r2}$.
The coloring of the transitions and places and the two different types of arcs show the behaviors observed in the event log.

The problem is that existing process mining techniques assume a ``flattened event log'' where each event refers to precisely one case.
However, we would like to see process models such as the one depicted in Figure~\ref{f-example-pm-3}.
One quickly encounters the problems described in this section when applying process mining to ERP systems from SAP, Oracle, Microsoft, and other vendors of enterprise software.

\subsection{Formalizing Object-Centric Event Logs}
\label{subsec:formallog}

Tables \ref{tablogtwoobjecttypes} and \ref{tablogthreeobjecttypes} illustrate the type of data we use as input for discovery.
Such data are in-between the real data in information systems (e.g., multiple tables in a relational database) and
the traditional event data stored in the \emph{eXtensible Event Stream} (XES) format \cite{XES-standard-2013}.
Whereas XES requires one case identifier per event, our format supports any number of objects of different types per event.
To define our \emph{object-centric event logs}, we first define several universes used in the remainder (based on \cite{wvda-keynote-SEFM2019}).

\begin{definition}[Universes]\label{def:univ}
We define the following universes to be used throughout the paper:
\begin{itemize}
  \item $\univ{ei}$ is the universe of event identifiers,
  \item $\univ{act}$ is the universe of activity names (also used to label transitions in an accepting Petri net),
  \item $\univ{time}$ is the universe of timestamps,
  \item $\univ{ot}$ is the universe of object types (also called classes),
  \item $\univ{oi}$ is the universe of object identifiers (also called entities),
  \item $\mi{type} \in \univ{oi} \rightarrow \univ{ot}$ assigns precisely one type to each object identifier,
  \item $\univ{omap} = \{ \mi{omap} \in \univ{ot}  \not\rightarrow \pow(\univ{oi}) \mid \forall_{\mi{ot}\in \mi{dom}(\mi{omap})}\ \forall_{\mi{oi}\in \mi{omap}(\mi{ot})}\ \mi{type}(\mi{oi}) = \mi{ot} \}$ is the universe of all object mappings indicating which object identifiers are included per type,\footnote{$\pow(\univ{oi})$ is the powerset of the universe of object identifiers, i.e., objects types are mapped onto sets of object identifiers. $\mi{omap} \in \univ{ot} \not\rightarrow \pow(\univ{oi})$ is a partial function. If $\mi{ot} \not\in \mi{dom}(\mi{omap})$, then we assume that $\mi{omap}(\mi{ot}) = \emptyset$.}
  \item $\univ{att}$ is the universe of attribute names,
  \item $\univ{val}$ is the universe of attribute values,
  \item $\univ{vmap} = \univ{att} \not\rightarrow \univ{val}$ is the universe of value assignments,\footnote{$\univ{att} \not\rightarrow \univ{val}$ is the set of all partial functions mapping a subset of attribute names onto the corresponding values.} and
  \item $\univ{event} = \univ{ei} \times \univ{act} \times \univ{time} \times \univ{omap} \times \univ{vmap}$ is the universe of events.
\end{itemize}
\end{definition}

An event $e = (\mi{ei},\mi{act},\mi{time},\mi{omap},\mi{vmap}) \in \univ{event}$ is characterized by
a unique event identifier $ \mi{ei}$,
the corresponding activity $\mi{act}$,
the event's timestamp $\mi{time}$,
and two mappings $ \mi{omap}$ and $\mi{vmap}$ for respectively object references and attribute values.

\begin{definition}[Event Projection]\label{def:evproj}
Given $e = (\mi{ei},\mi{act},\mi{time},\mi{omap},\mi{vmap}) \in \univ{event}$,
$\pim{ei}(e)= \mi{ei}$,
$\pim{act}(e)= \mi{act}$,
$\pim{time}(e)= \mi{time}$,
$\pim{omap}(e)= \mi{omap}$, and
$\pim{vmap}(e)= \mi{vmap}$.
\end{definition}

$\pim{omap}(e) \in \univ{ot} \not\rightarrow \pow(\univ{oi})$ maps a subset of object types onto sets of object identifiers for an event $e$.
Consider for example the first visible event in Table~\ref{tablogthreeobjecttypes} and assume this is $e$.
$\pim{omap}(e)(\mi{Order}) = \{ 99001\}$, $\pim{omap}(e)(\mi{Item}) = \{ 88124,88125,88126\}$, and $\pim{omap}(e)(\mi{Route}) = \emptyset$.
Moreover, $\pim{act}(e) =$ \emph{place order} and $\pim{time}(e) =$ 25-11-2019:09.35. $\mi{dom}(\pim{vmap}(e)) = \emptyset$ since  no attribute values are mentioned in Table~\ref{tablogthreeobjecttypes}.
If the event would have a cost of 30 euros and location Aachen, then $\pim{vmap}(e)(\mi{cost}) = 30$ and $\pim{vmap}(e)(\mi{location}) =$ \emph{Aachen}.

An \emph{object-centric event log} is a collection of \emph{partially ordered events}. Event identifiers are unique, i.e., two events cannot have the same event identifier.

\begin{definition}[Object-Centric Event Log]\label{def:el}
$L=(E,\preceq_E)$ is an event log with $E \subseteq \univ{event}$ and $\preceq_E\ \subseteq E \times E$ such that:
\begin{itemize}
\item $\preceq_E$ defines a partial order (reflexive, antisymmetric, and transitive),
\item $\forall_{e_1,e_2 \in E} \ \pim{ei}(e_1)=\pim{ei}(e_2) \ \Rightarrow \ e_1 = e_2$, and
\item $\forall_{e_1,e_2 \in E} \ e_1 \preceq_E e_2 \ \Rightarrow \ \pim{time}(e_1) \leq \pim{time}(e_2)$.
\end{itemize}
\end{definition}

Definition~\ref{def:el} allows for partially ordered event logs.
However, in practice, we often use a total order, e.g., events are ordered based on timestamps and when two events have the same timestamp we assume some order.
In the tabular format used before (e.g., Table~\ref{tablogthreeobjecttypes}) we were also forced to use a total order.
However, there are process discovery techniques that take into account causalities \cite{wvda-keynote-SEFM2019,Xixi-Conf-Check-bpi2014-lnbip2015}. These can exploit such partial orders.

\section{Discovering Petri Nets for a Single Object Type}
\label{sec:discot}

Object-centric event logs generalize the traditional event log notion where each event has precisely one case identifier.
We can mimic such logs using a special object type $\mi{case} \in \univ{ot}$ such that $\card{\pim{omap}(e)(\mi{case})}=1$ for any event $e\in E$.
Since traditional process mining techniques assume this, it is common practice to convert event data with events referring to a variable number of objects to classical event logs by ``flattening'' the event data.
Assume that we take a specific object type as a case identifier. If an event has multiple objects of that type, then we can simply create one event for each object.
If an event has no objects of that type, then we simply omit the event. If an event has precisely one object of the selected type, then we keep that event. This can be formalized as follows.

\begin{definition}[Flattening Event Logs]\label{def:flat}
Let $L=(E,\preceq_E)$ be an object-centric event log and $\mi{ot} \in \univ{ot}$ an object type serving as a case notion.
The flattened event log is $L^{\mi{ot}} = (E^{\mi{ot}},\preceq_E^{\mi{ot}})$ with:\footnote{$f' = f\oplus(x,y)$ is a function such that $\mi{dom}(f') = \mi{dom}(f) \cup \{x\}$, $f'(x) = y$ and $f'(z)=f(z)$ for $z \in \mi{dom}(f)\setminus \{x\}$.}
\begin{itemize}
\item $e_i = ((\pim{ei}(e),i),\pim{act}(e),\pim{time}(e),\pim{omap}(e)\oplus(\mi{case},\{i\}),\pim{vmap}(e))$ for any $e \in E$ and $i \in \pim{omap}(e)(\mi{ot})$,
\item $E^{\mi{ot}} = \{ e_i \mid e \in E \ \wedge \ i \in \pim{omap}(e)(\mi{ot})\}$, and
\item $\preceq_E^{\mi{ot}} = \{ (e'_i,e''_j) \in E^{\mi{ot}} \times E^{\mi{ot}} \mid e' \in E \ \wedge \ i \in \pim{omap}(e')(\mi{ot})\ \wedge \  e'' \in E \ \wedge \ j \in \pim{omap}(e'')(\mi{ot}) \ \wedge \ e' \preceq_E e'' \ \wedge \ (e'= e'' \Rightarrow i=j)\}$.
\end{itemize}
\end{definition}

A flattened event log is still an event log after removing and duplicating events.

\begin{lemma}\label{lem:flat}
Let $L=(E,\preceq_E)$ be an object-centric event log and $\mi{ot} \in \univ{ot}$ an object type serving as a case notion.
The flattened event log $L^{\mi{ot}} = (E^{\mi{ot}},\preceq_E^{\mi{ot}})$ is indeed an event log as defined in Definition~\ref{def:el}.
\end{lemma}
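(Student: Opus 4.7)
The plan is to verify, in turn, each of the three clauses of Definition~\ref{def:el} for the pair $L^{\mi{ot}} = (E^{\mi{ot}},\preceq_E^{\mi{ot}})$. The first observation I would record is the obvious one: for every $e_i \in E^{\mi{ot}}$ the projections satisfy $\pim{ei}(e_i) = (\pim{ei}(e),i)$ and $\pim{time}(e_i) = \pim{time}(e)$. These identities reduce each clause to a statement about the underlying event $e$ and the added index $i$, after which the proof becomes bookkeeping.

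For uniqueness of event identifiers, I would assume $\pim{ei}(e'_i) = \pim{ei}(e''_j)$, i.e.\ $(\pim{ei}(e'),i) = (\pim{ei}(e''),j)$; the first component gives $\pim{ei}(e') = \pim{ei}(e'')$, hence $e' = e''$ by the uniqueness clause on $L$, and the second component gives $i=j$, so $e'_i = e''_j$. For the timestamp monotonicity, if $e'_i \preceq_E^{\mi{ot}} e''_j$ the defining conjunction gives $e' \preceq_E e''$, and monotonicity in $L$ then yields $\pim{time}(e'_i) = \pim{time}(e') \leq \pim{time}(e'') = \pim{time}(e''_j)$.

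The bulk of the work is showing that $\preceq_E^{\mi{ot}}$ is a partial order. Reflexivity is immediate: for any $e_i \in E^{\mi{ot}}$, take $e'=e''=e$ and note that $e \preceq_E e$ holds and the implication $(e'=e'' \Rightarrow i=j)$ is trivially satisfied. Antisymmetry uses the guard $(e' = e'' \Rightarrow i = j)$ essentially: from $e'_i \preceq_E^{\mi{ot}} e''_j$ and $e''_j \preceq_E^{\mi{ot}} e'_i$ one obtains $e' \preceq_E e''$ and $e'' \preceq_E e'$, hence $e' = e''$ by antisymmetry of $\preceq_E$, at which point the guard forces $i = j$ and thus $e'_i = e''_j$.

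The step I expect to be slightly more delicate is transitivity, because it is the only place where the index-equality side condition could in principle fail. Given $e'_i \preceq_E^{\mi{ot}} e''_j$ and $e''_j \preceq_E^{\mi{ot}} e'''_k$, transitivity of $\preceq_E$ gives $e' \preceq_E e'''$. I still need to discharge the guard $(e' = e''' \Rightarrow i = k)$. If $e' = e'''$, then $e' \preceq_E e''$ and $e'' \preceq_E e' (= e''')$, so antisymmetry of $\preceq_E$ forces $e' = e'' = e'''$; applying the guards of the two original relations then yields $i=j$ and $j=k$, whence $i=k$. This closes transitivity and hence the lemma.
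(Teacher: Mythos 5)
Your proposal is correct and follows essentially the same route as the paper's proof: clause-by-clause verification, with the only nontrivial point being transitivity of $\preceq_E^{\mi{ot}}$, which both you and the paper discharge by a case split on $e' = e'''$ and antisymmetry of $\preceq_E$ to force $e'=e''=e'''$ and hence $i=j=k$. Your treatment of identifier uniqueness is, if anything, slightly more explicit than the paper's.
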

\begin{proof}
$\preceq_E^{\mi{ot}}$ defines a partial order.
For any $e_i\in E^{\mi{ot}}$, $e_i \preceq_E^{\mi{ot}} e_i$ (reflexive).
If $e'_i \preceq_E^{\mi{ot}} e''_j$ and $e''_j \preceq_E^{\mi{ot}} e'_i$, then $e'= e''$ and $i=j$, and hence also $e'_i = e''_j$ (antisymmetric).
If $e'_i \preceq_E^{\mi{ot}} e''_j$ and $e''_j \preceq_E^{\mi{ot}} e'''_k$, then $e' \preceq_E e''$, $e'' \preceq_E e'''$, $(e'= e'' \Rightarrow i=j)$, and $(e''= e''' \Rightarrow j=k)$.
Hence, $e' \preceq_E e'''$ ($\preceq_E$ is transitive).
If $e' \neq e'''$, then $e'_i \preceq_E^{\mi{ot}} e'''_k$ due to the definition of $\preceq_E^{\mi{ot}}$.
If $e' = e'''$, then $e'= e''$ and $e''=e'''$. Hence, $i=j$ and $j=k$ (see above), and $i=k$.
Again we conclude that $e'_i \preceq_E^{\mi{ot}} e'''_k$ (transitive).
If $\pim{ei}(e'_i)=\pim{ei}(e''_j)$, then $(e',i) = (e'',j)$ making event identifiers unique.
If $e'_i \preceq_E^{\mi{ot}} e''_j$, then $e' \preceq_E e''$. Hence, $\pim{time}(e'_i) = \pim{time}(e')  \leq \pim{time}(e'') = \pim{time}(e''_j)$ showing that time cannot go backwards.
\end{proof}

Table~\ref{tablogthreeobjecttypes} shows eight events (the rest is omitted). Assume $L = (E,\preceq_E)$ is the log consisting of only these eight events.
The flattened event log  $L^{\mi{Order}} = (E^{\mi{Order}},\preceq_E^{\mi{Order}})$ has four events (the four middle events in Table~\ref{tablogthreeobjecttypes} are removed).
The flattened event log $L^{\mi{Item}} = (E^{\mi{Item}},\preceq_E^{\mi{Item}})$ has 20 events since all original events are replicated two or three times.
The flattened event log  $L^{\mi{Route}} = (E^{\mi{Route}},\preceq_E^{\mi{Route}})$ has four events.

Assume now that $L=(E,\preceq_E)$ is flattened using object type $\mi{ot}$ leading to event log $L^{\mi{ot}} = (E^{\mi{ot}},\preceq_E^{\mi{ot}})$.
We then have a conventional event log with a selected case notion and can apply all existing process mining techniques.
However, flattening the event log using $\mi{ot}$ as a case notion potentially leads to the following problems.
\begin{itemize}
  \item \emph{Deficiency}: Events in the original event log that have \emph{no} corresponding events in the flattened event log disappear from the data set (i.e., $\pim{omap}(e)(\mi{ot}) = \emptyset$).
  \item \emph{Convergence}: Events referring to \emph{multiple} objects of the selected type are replicated, possibly leading to unintentional duplication (i.e., $\card{\pim{omap}(e)(\mi{ot})} \geq 2$).
  \item \emph{Divergence}: Events referring to \emph{different} objects of a type \emph{not} selected as the case notion are considered to be causally related. For example, two events refer to the same order but different times or two events refer to the same route but different items.
\end{itemize}

\begin{definition}[Deficiency, Convergence, and Divergence]\label{def:fltproblems}
Let $L=(E,\preceq_E)$ be an object-centric event log and $L^{\mi{ot}} = (E^{\mi{ot}},\preceq_E^{\mi{ot}})$ the
flattened event log based on object type $\mi{ot} \in \univ{ot}$.
Event $e \in E$ has a deficiency problem if $\pim{omap}(e)(\mi{ot}) = \emptyset$ (i.e., the event is ignored when using $\mi{ot}$ as case notion).
Event $e \in E$ has a convergence problem if $\card{\pim{omap}(e)(\mi{ot})} \geq 2$ (i.e., the event is unintentionally replicated when using $\mi{ot}$ as case notion).
Event $e \in E$ has a divergence problem if there exist another event 
$e' \in E$ and object type $\mi{ot}' \in \univ{ot}$
such that $\pim{omap}(e)(\mi{ot}) \neq \emptyset$, $\pim{omap}(e')(\mi{ot}) \neq \emptyset$,
$\pim{omap}(e)(\mi{ot}') \neq \emptyset$, $\pim{omap}(e')(\mi{ot}') \neq \emptyset$,
$\pim{omap}(e)(\mi{ot}) = \pim{omap}(e')(\mi{ot})$, and $\pim{omap}(e)(\mi{ot}') \neq \pim{omap}(e')(\mi{ot}')$.
\end{definition}

Note that in case of divergence, there are two events $e$ and $e'$ and two candidate case notions $\mi{ot}$
and $\mi{ot}'$ such that both events refer to objects of both object types 
and the events ``agree'' on $\mi{ot}$ but not on $\mi{ot}'$.

Consider again the eight events shown in Table~\ref{tablogthreeobjecttypes}. When taking \emph{Order} or \emph{Route} as the object type used to flatten the event log, half the events disappear (deficiency).
When taking \emph{Item} as the object type used to flatten the event log, the first event is replaced by three \emph{place order} events, the second event is replaced by two \emph{place order} events, etc. This is misleading since these replicated events occurred only once (convergence).
To explain divergence, assume that an order consists of 10 items and object type \emph{Order} is used to flatten the event log.
There will be 10 pick events that are executed in a given order. Although they are independent, they will seem to be causally related (same case) and most discovery algorithms will introduce a loop, although there is precisely one pick event per item.
\begin{figure}[htb]
\centerline{\includegraphics[width=12cm]{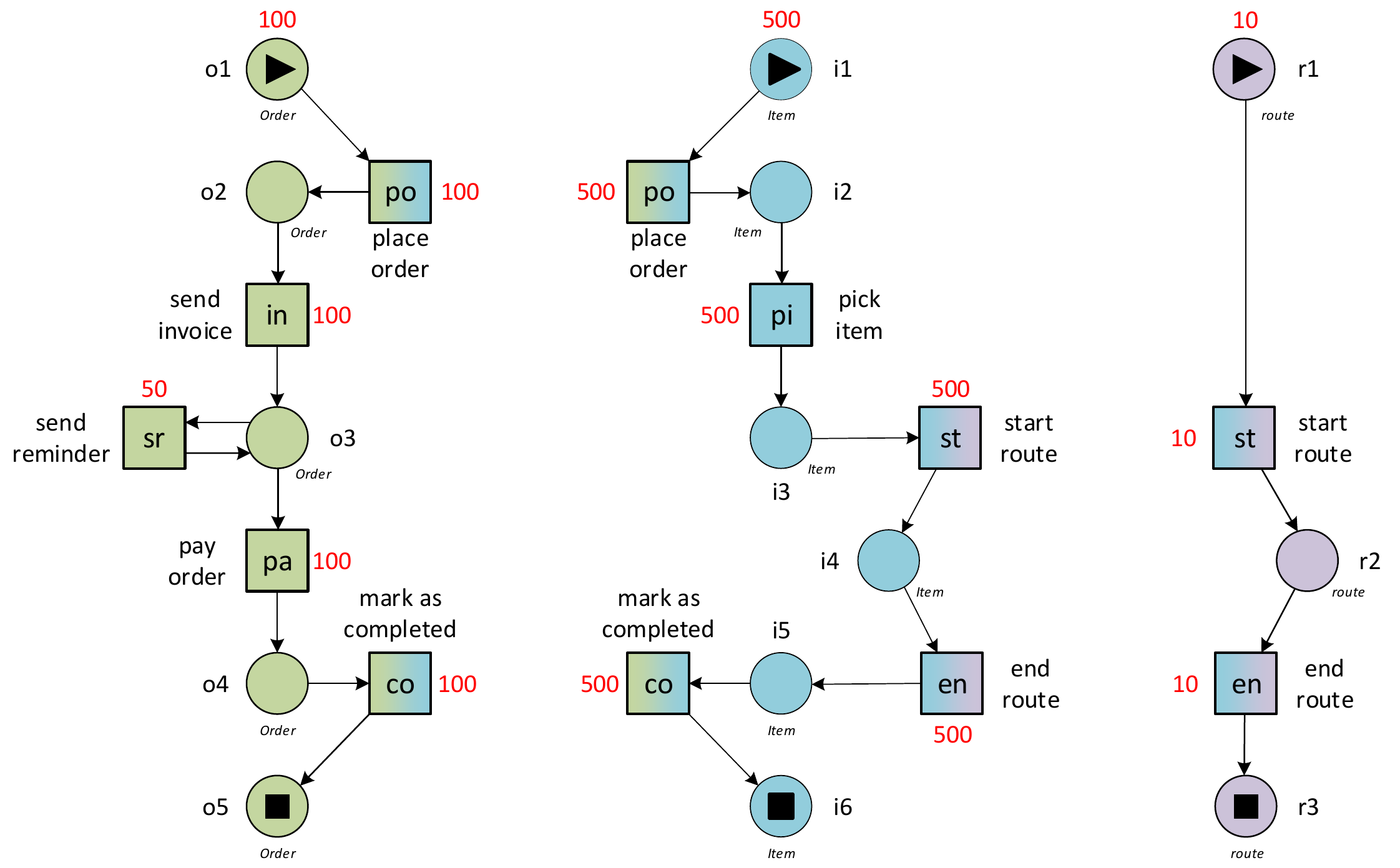}}
\caption{Three accepting Petri nets discovered for the three flattened event logs: $(E^{\mi{Order}},\preceq_E^{\mi{Order}})$ (left), $(E^{\mi{Item}},\preceq_E^{\mi{Item}})$ (middle), and $(E^{\mi{Route}},\preceq_E^{\mi{Route}})$ (right). The numbers in red refer to the total number of tokens produced or consumed per arc.}
\label{f-example-pm-4}
\end{figure}

Figure~\ref{f-example-pm-4} shows three process models discovered for three flattened event logs: $L^{\mi{Order}} = (E^{\mi{Order}},\allowbreak \preceq_E^{\mi{Order}})$, $L^{\mi{Item}} = (E^{\mi{Item}},\allowbreak \preceq_E^{\mi{Item}})$, and $L^{\mi{Route}} = (E^{\mi{Route}},\allowbreak \preceq_E^{\mi{Route}})$.
For example, the accepting Petri net in the middle was discovered based on $L^{\mi{Item}}$, i.e., the original event log flattened using object type \emph{Item}.
Assume that there are 100 orders with on average 5 items per order. This implies that there are 500 items. Assume that each route consists, on average, of 50 items that need to be delivered, i.e., there are 10 routes in total.
These numbers are depicted in Figure~\ref{f-example-pm-4}. Although the three accepting Petri nets look reasonable, they do not ``fit'' together (the frequencies of the corresponding activities are different).
For example, in the left model (order) the \emph{place order} activity is performed 100 times and in the middle model (item) the same activity is executed 500 times (factor 5).
In the right model (route) the \emph{start route} activity is performed 10 times and in the middle model (item) the same activity is executed 500 times (factor 50).
These mismatches illustrate the convergence problem. One could argue that the accepting Petri net in the middle is wrong because the frequencies of activities do not match the frequencies in the original process model.
\begin{figure}[htb]
\centerline{\includegraphics[width=15cm]{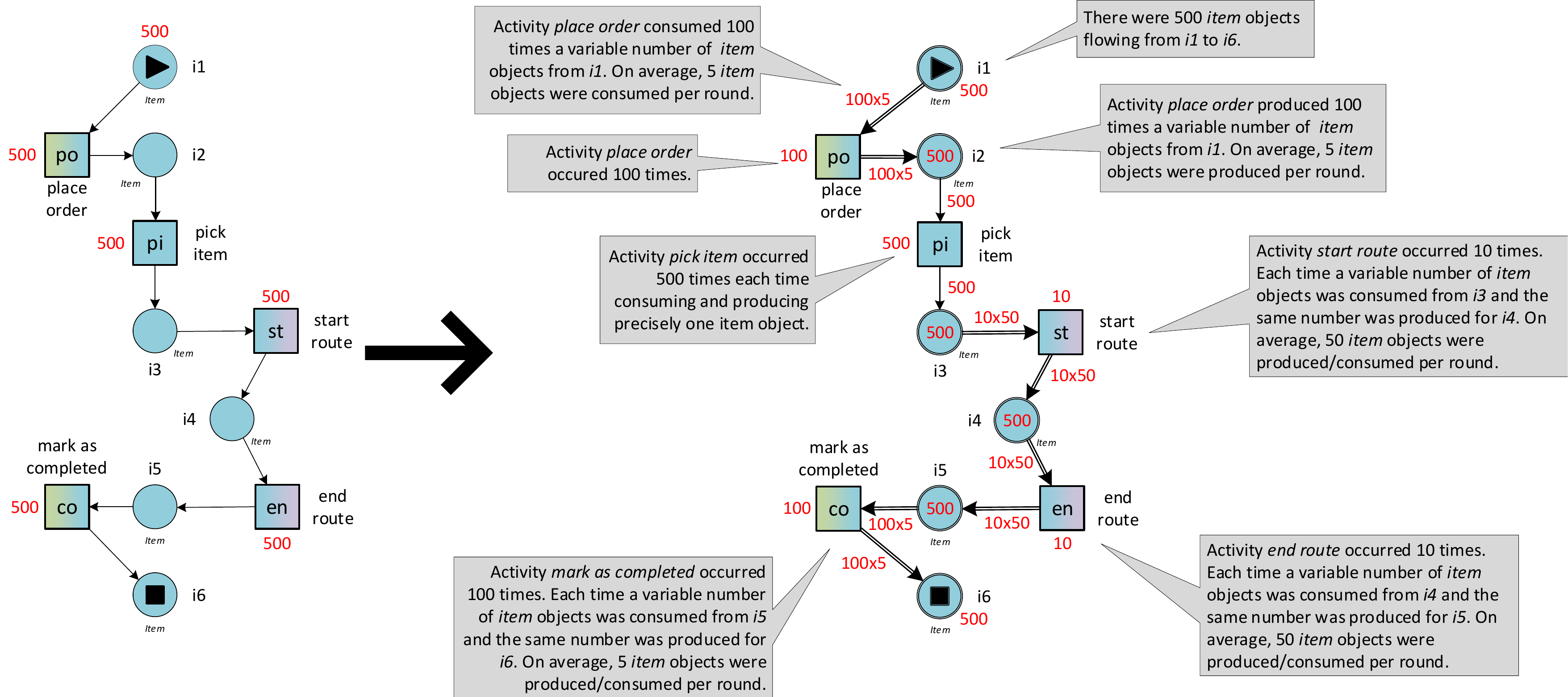}}
\caption{The model on the left was discovered for $L^{\mi{Item}} = (E^{\mi{Item}},\preceq_E^{\mi{Item}})$. Because of flattening, the frequencies of activities are not correct. However, it is known which transition occurrences belonged to each event and we can regroup them. This can be used to merge occurrences, leading to the process model on the right.}
\label{f-example-pm-5}
\end{figure}

Figure~\ref{f-example-pm-5} sketches how the problem of incorrect activity frequencies can be resolved using \emph{variable arcs}, i.e., arcs that can be used to consume or produce multiple tokens in one step.
Such ``multiset arcs'' are also possible in colored Petri nets \cite{mbp-aal-stahl-2011,cpnbook-jensen-2009}.
When an event was replicated to produce the ``flat model'' (e.g., Figure~\ref{f-example-pm-4}), we can merge the corresponding transition occurrences into one transition occurrence that may consume and produce multiple tokens.
See for example transition \emph{place order}. In the accepting Petri net on the left, transition \emph{place order} fires 500 times when replaying the flattened event log $L^{\mi{Item}}$.
However, we know exactly which transition occurrences belong together. This can be used to reconstruct transition occurrences that consume and produce a variable number of tokens in one step.
For transition \emph{place order} this means that 500 occurrences are merged onto 100 occurrences that, on average, consume and produce 5 tokens per arc. To indicate this, we use compound double arrows with the annotation $100 \times 5$.
Next, consider transition \emph{start route}. In the model on the left, transition \emph{start route} fires 500 times.
However, we know exactly which of these 500 transition occurrences belong to the 10 routes. Again these low-level transition occurrences can be merged into higher-level transition occurrences that consume and produce a variable number of tokens in one step.
For transition \emph{start route} this means that there are 10 occurrences that, on average, consumer and produce 50 tokens per arc. To indicate this, we use again compound double arrows, but now with the annotation $10 \times 50$.
Only the occurrences of \emph{pick item} did not change due to flattening. Hence, the corresponding arcs did not change.

Figure~\ref{f-example-pm-5} sketches how we can create Petri nets for one object type where the frequency of each transition matches the actual number of corresponding events in the event log.
These models can be merged into more holistic process models showing the different object types as is shown next.

\section{Object-Centric Petri Nets}
\label{sec:ocpn}

As indicated in the previous sections, we need to be able to distinguish the \emph{different object types} and a single event (i.e., transition occurrence)
may involve a \emph{variable number of objects} (e.g., one order may have any number of items).
An obvious way to model such processes is to use colored Petri nets where places can have different types \cite{mbp-aal-stahl-2011,cpnbook-jensen-2009}.
Figure~\ref{f-cpn} shows a screenshot of CPN Tools while simulating the scenario with 100 orders, 500 items, and 10 routes described before.
The color sets \emph{Order}, \emph{Item}, and \emph{Route} are used to type the places.
The ten arcs with the annotation $or$ produce or consume a single order.
The two arcs with the annotation $it$ produce or consume a single item.
The four arcs with the annotation $rt$ produce or consume a single route.
There are eight arcs with the annotation $its$ which is a variable of type \emph{Items}, i.e., a list of items.
These consume or produce a variable number of item objects.
The four guards determine the correspondence between orders, routes, and items.
For example, the guard $[its=oi(or)]$ of transition \emph{place order} specifies the set of items $its$ involved in a specific order $or$. The same guard is used for transition \emph{mark as completed}.
Transitions \emph{start route} and \emph{end route} use guard $[its=ri(rt)]$ to determine the items $its$ involved in route $rt$.
\begin{figure}[htb]
\centerline{\includegraphics[width=14cm]{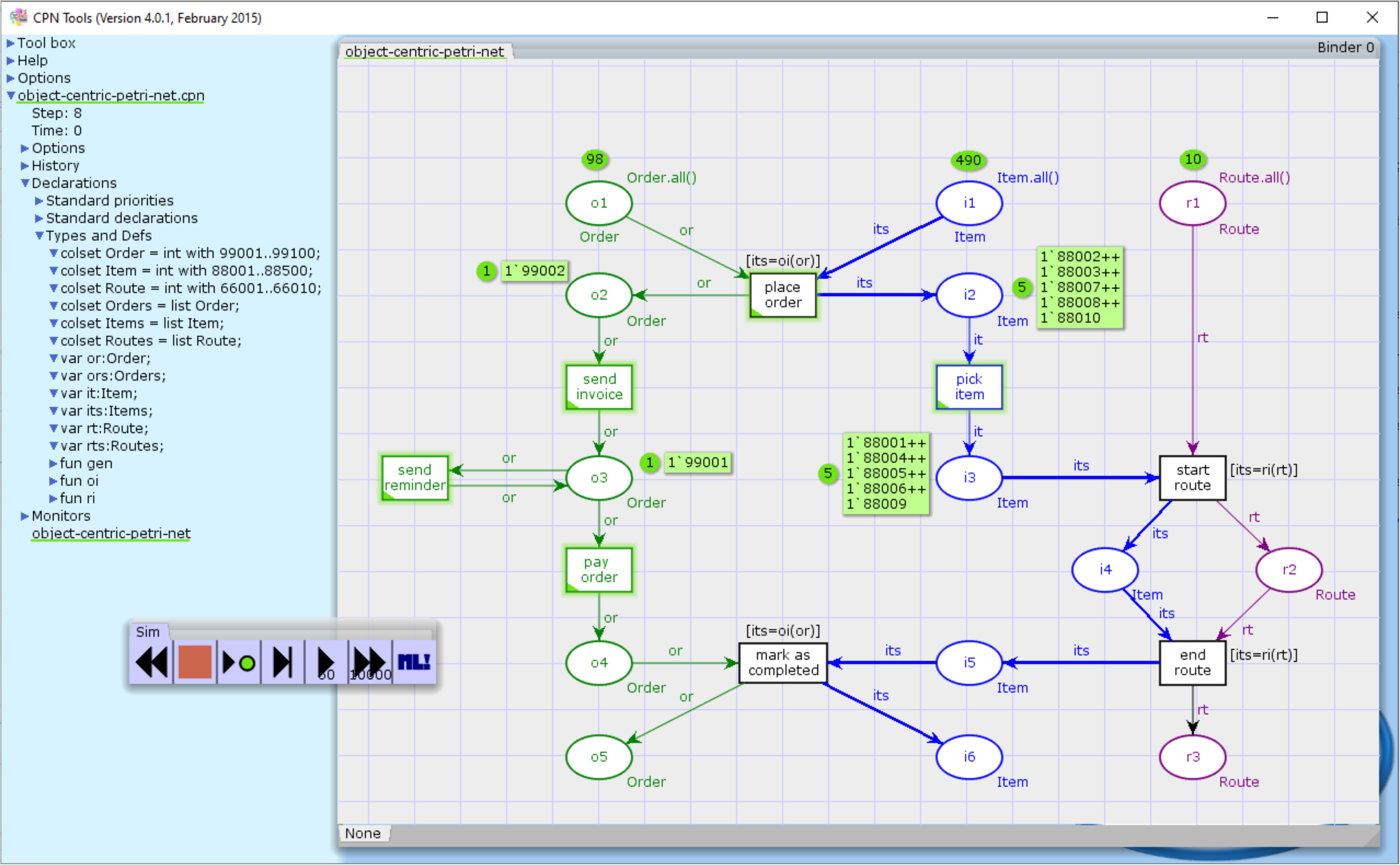}}
\caption{A colored Petri net in CPN Tools \cite{mbp-aal-stahl-2011,cpnbook-jensen-2009} modeling the process depicted in Figure~\ref{f-example-pm-3} which was discovered from the event data in Table~\ref{tablogthreeobjecttypes}.}
\label{f-cpn}
\end{figure}

Figure~\ref{f-cpn} shows that one can model processes involving multiple objects using colored Petri nets (or related formalisms).
However, it is infeasible to discover an \emph{arbitrary} colored Petri net from an (object-centric) event log.
We need a \emph{representational bias} that corresponds to the information in the event log.
Therefore, we aim to discover a specific type of colored Petri net.
To simplify matters, we also abstract from the matching between the different objects (i.e., the guards in Figure~\ref{f-cpn}).
This allows us to use a more specific and more abstract representation called \emph{object-centric Petri net}.\footnote{Terms similar to ``object Petri nets'' were already used by R\"{u}diger Valk, Charles Lakos, Jinzhong Niu, Li-Chi Wang, Daniel Moldt, and others. Note that our nets are different and some overloading of terminology is unavoidable.}

\begin{definition}[Object-Centric Petri Net]\label{def:oopn}
An \emph{object-centric Petri net} is a tuple $\mi{ON}=(N,\mi{pt},F_{\mi{var}})$ where
$N=(P,T,F,l)$ is a labeled Petri net, $\mi{pt} \in P \rightarrow \univ{ot}$ maps places onto object types, and $F_{\mi{var}}\subseteq F$ is the subset of variable arcs.
\end{definition}

Figure~\ref{f-example-pm-3} shows an object-centric Petri net:
$P = \{o1,\ldots,o5,i1,\ldots,i6,r1,r2,r3\}$, $T = \{\mi{po},\allowbreak \mi{in},\allowbreak \mi{pi}, \ldots\}$,
$F = \{(o1,\mi{po}),(i1,\mi{po}),(\mi{po},o2),(\mi{po},i2), \ldots\}$,
$l(\mi{po})=$ \emph{place order}, $l(\mi{in})=$ \emph{send invoice}, etc.,
$\mi{pt}(o1)= \mi{Order}$, $\mi{pt}(i1)= \mi{Item}$, $\mi{pt}(r1)= \mi{Route}$, etc., and
$F_{\mi{var}} = \{(i1,\mi{po}),(\mi{po},i2), \ldots\}$.
Note that the graphical notation in Figure~\ref{f-example-pm-3} fully defines the object-centric Petri net.

\begin{definition}[Well-Formed]\label{def:wfoopn}
Let $\mi{ON}=(N,\mi{pt},F_{\mi{var}})$ be an object-centric Petri net with $N=(P,T,F,l)$. We introduce the following notations:
\begin{itemize}
  \item $\mi{pl}(t) = \pre t \cup \post t$ are the input and output places of $t\in T$, $\mi{pl}_{\mi{var}}(t) = \{p \in P \mid \{(p,t),(t,p)\} \cap F_{\mi{var}} \neq \emptyset\}$ are the input and output places connected through variable arcs, and $\mi{pl}_{\mi{nv}}(t) = \{p \in P \mid \{(p,t),(t,p)\} \cap (F \setminus F_{\mi{var}}) \neq \emptyset\}$ are the places connected through non-variable arcs.
  \item $\mi{tpl}(t) = \{pt(p) \mid p \in \mi{pl}(t)\}$, $\mi{tpl}_{\mi{var}}(t) = \{pt(p) \mid p \in \mi{pl}_{\mi{var}}(t)\}$, and $\mi{tpl}_{\mi{nv}}(t) = \{pt(p) \mid p \in \mi{pl}_{\mi{nv}}(t)\}$ are the corresponding place types.
\end{itemize}
$\mi{ON}$ is \emph{well-formed} if for each transition $t \in T$: $\mi{tpl}_{\mi{var}}(t) \cap \mi{tpl}_{\mi{nv}}(t) = \emptyset$.
\end{definition}

In a \emph{well-formed} object-centric Petri net, the arcs should ``agree'' on variability, i.e., a combination of an object type and transition has variable arcs or normal arcs but not both.
For example, because $(i1,\mi{po}) \in F_{\mi{var}}$ also $(\mi{po},i2) \in F_{\mi{var}}$.
Because $(o1,\mi{po}) \not\in F_{\mi{var}}$ also $(\mi{po},o2) \not\in F_{\mi{var}}$.
This assumption is reasonable when looking at an object-centric event log. 
Per event $e$ and object type $\mi{ot}$, $\pim{omap}(e)(\mi{ot})$ is given.
Therefore, it makes no sense to consider different sets of objects of the same type $\mi{ot}$ per transition $t$.
In the remainder, we limit ourselves to well-formed object-centric Petri nets (without explicitly stating this).

A token denoted by $(p,\mi{oi})$ resides in place $p$ and refers to object $\mi{oi}$. A marking is a multiset of such tokens.
In the marking $[(p1,666),(p2,666),(p2,555),(p3,555)]$ there are four tokens (place $p2$ has two tokens referring to objects 555 and 666).
\begin{definition}[Marking]\label{def:oomark}
Let $\mi{ON}=(N,\mi{pt},F_{\mi{var}})$ be an object-centric Petri net with $N=(P,T,F,l)$.
$Q_{\mi{ON}} = \{(p,\mi{oi}) \in P \times \univ{oi} \mid \mi{type}(\mi{oi}) = \mi{pt}(p)\}$ is the set of possible tokens.
A marking $M$ of $\mi{ON}$ is a multiset of tokens, i.e., $M \in \bag(Q_{\mi{ON}})$.
\end{definition}

To describe the semantics of an object-centric Petri net, we use the notion of \emph{bindings}, similar to the notion of bindings in colored Petri nets.
However, now the binding refers to the object references of the corresponding event in the event log.
A binding $(t,b)$ refers to a transition $t$ and a function $b$ that maps a subset of object types to sets of object identifiers.
The subset of object types corresponds to the object types of the surrounding places (i.e., $\mi{tpl}(t)$).
Moreover, for non-variable arcs the binding should select precisely one object (i.e., $\card{b(\mi{ot})}=1$ for $\mi{ot} \in \mi{tpl}_{\mi{nv}}(t)$).
Consider transition $t$ and one of its input place $p$ (i.e., $p \in \pre t$).
If $t$ fires with binding $(t,b)$, then $\mi{pt}(p) \in \mi{dom}(b)$ and the objects $b(\mi{pt}(p))$ are removed from input place $p$.
If $p$ is an output place of $t$ ($p \in \post t$), then the objects $b(\mi{pt}(p))$ are added to output place $p$.
Therefore, binding $(t,b)$ fully determines the new marking.

\begin{definition}[Binding Execution]\label{def:oocctrans}
Let $\mi{ON}=(N,\mi{pt},F_{\mi{var}})$ be an object-centric Petri net with $N=(P,T,F,l)$.
$B = \{ (t,b) \in T \times \univ{omap} \mid \mi{dom}(b) = \mi{tpl}(t) \ \wedge \ \allowbreak \forall_{\mi{ot} \in \mi{tpl}_{\mi{nv}}(t)}\ \allowbreak \card{b(\mi{ot})}=1 \}$  is the set of all possible bindings.
$(t,b) \in B$ is a binding and corresponds to the execution of transition $t$ consuming selected objects from the input places and producing the corresponding objects for the output places (both specified by $b$).
$\mi{cons}(t,b) = [ (p,\mi{oi}) \in Q_{\mi{ON}} \mid p \in \pre t \ \wedge \ \mi{oi} \in b(\mi{pt}(p))]$ 
is the multiset of tokens to be consumed given binding $(t,b)$.
$\mi{prod}(t,b) = [ (p,\mi{oi}) \in Q_{\mi{ON}} \mid p \in \post t \ \wedge \ \mi{oi} \in b(\mi{pt}(p))]$
is the multiset of tokens to be produced given binding $(t,b)$.
Binding $(t,b)$ is \emph{enabled} in marking $M \in \bag(Q_{\mi{ON}})$ if $\mi{cons}(t,b) \leq M$.
The occurrence of an enabled binding $(t,b)$ in marking $M$ leads to the new marking $M' = M - \mi{cons}(t,b) + \mi{prod}(t,b)$.\footnote{Summation ($+$), difference ($-$), and inclusion ($\leq$) are defined for multisets in the usual way, e.g., $[a,b] + [b,c] = [a,b^2,c]$, $[a,b^2,c] - [b,c] = [a,b]$, and $[a,b] \leq [a,b^2,c]$.} This is denoted as $M \stackrel{(t,b)}{\longrightarrow} M'$.
\end{definition}

$M \stackrel{(t,b)}{\longrightarrow} M'$ implies that binding $(t,b)$ is enabled in marking $M$ and that the occurrence of this binding leads to the new marking $M'$.
It is also possible to have a sequence of enabled bindings $\sigma = \langle (t_1,b_1),(t_2,b_2), \ldots ,(t_n,b_n)\rangle \in B^*$ such that
$M_0 \stackrel{(t_1,b_1)}{\longrightarrow} M_1 \stackrel{(t_2,b_2)}{\longrightarrow} M_2 \stackrel{(t_3,b_3)}{\longrightarrow} \ldots \stackrel{(t_n,b_n)}{\longrightarrow} M_n$, i.e., it is possible to reach $M_n$ from $M_0$ in $n$ steps. This is denoted $M \stackrel{\sigma}{\longrightarrow} M'$.
It is also possible to map the transition names onto the corresponding activity names using $l$ leading to the so-called \emph{visible binding sequence} $\sigma_v = \langle (l(t_1),b_1),(l(t_2),b_2), \ldots ,(l(t_n),b_n)\rangle$ (where $(l(t_i),b_i)$ is omitted if $t_i$ has no label).
Note that the visible binding sequence does not show silent steps (transitions with no label) and cannot distinguish duplicate activities (two transitions with the same label).

It should be noted that Definition~\ref{def:oocctrans} does not put any constraints on the binding other than that for non-variable arcs precisely one token is consumed/produced.
In the colored Petri in Figure~\ref{f-cpn} there are four transitions with guards to link items to specific orders and routes.
This is deliberately abstracted from in Definition~\ref{def:oopn} to enable the discovery of so-called \emph{accepting object-centric Petri nets} with an initial and final marking.

\begin{definition}[Accepting Object-Centric Petri Net]\label{def:aoopn}
An accepting object-centric Petri net is a tuple $\mi{AN}=(\mi{ON},M_{\mi{init}},M_{\mi{final}})$ composed of a well-formed object-centric Petri net $\mi{ON}=(N,\mi{pt},F_{\mi{var}})$,
an initial marking $M_{\mi{init}} \in \bag(Q_{\mi{ON}})$, and a final marking $M_{\mi{final}} \in \bag(Q_{\mi{ON}})$.
\end{definition}

Using the notion of a visible binding sequence, we can reason about all behaviors leading from the initial to the final marking.

\begin{definition}[Language of an Object-Centric Petri Net]\label{def:laocpn}
An accepting object-centric Petri net $\mi{AN}=(\mi{ON},M_{\mi{init}},M_{\mi{final}})$ defines a language
$\phi(\mi{AN}) = \{ \sigma_v \mid M_{\mi{init}} \stackrel{\sigma}{\longrightarrow} M_{\mi{final}} \}$ that is composed of all visible binding sequences starting in $M_{\mi{init}}$ and ending in $M_{\mi{final}}$.
\end{definition}

Note that the behavior of an accepting object-centric Petri net is deliberately ``underspecified''. There are only typing and cardinality constraints. Hence, objects of different types are unrelated. Compared to the colored Petri net in Figure~\ref{f-cpn}, our process models do not use guards to relate objects of different types.
Note that guards combine objects of different types that are only characterized by an identifier. Using just the identifiers would lead to overfitting models. How to find a rule telling that order 99001 is composed of items 88124, 88125, and 88126?
This is contained in the data and cannot be handled by a precise and explicit rule.
As mentioned in the conclusion, this a topic for future research (cf.\ Section~\ref{sec:concl}).

\section{Discovering Object-Centric Petri Nets}
\label{sec:discoopn}

First, we introduce a general approach to learn accepting object-centric Petri nets from object-centric event logs.
Then we discuss performance-related annotations of the models, model views, and ways to combine these results with traditional process mining techniques.

\subsection{Generic Approach}
\label{sec:appdiscoopn}

Given an object-centric event log $L = (E,\preceq_E)$ (Definition~\ref{def:el}), we would like to discover an accepting object-centric Petri net  $\mi{AN}=(\mi{ON},M_{\mi{init}},M_{\mi{final}})$ (Definition~\ref{def:aoopn}).
Rather than defining one specific discovery algorithm, we present a general approach leveraging existing process discovery techniques.
\begin{itemize}
  \item \textbf{Step 1:} Given an object-centric event log $L = (E,\preceq_E)$, identify the object types $\mi{OT} \subseteq \univ{ot}$ appearing in the event log. Then create a flattened event log $L^{\mi{ot}} = (E^{\mi{ot}},\preceq_E^{\mi{ot}})$ for each object type $\mi{ot} \in \mi{OT}$.

  \item \textbf{Step 2:} Discover an accepting Petri net $\mi{SN}^{\mi{ot}}=(N^{\mi{ot}},M_{\mi{init}}^{\mi{ot}},M_{\mi{final}}^{\mi{ot}})$ with $N^{\mi{ot}}=(P^{\mi{ot}},T^{\mi{ot}},\allowbreak F^{\mi{ot}},\allowbreak l^{\mi{ot}})$ for each object type $\mi{ot} \in \mi{OT}$ using the flattened event log $L^{\mi{ot}}$. For this purpose, any conventional discovery technique can be used. The only assumption we need to make is that there are no duplicated labels, i.e., labeling function $l^{\mi{ot}}$ is injective. However, we allow for silent transitions, i.e., $l^{\mi{ot}}$ may be partial.

  \item \textbf{Step 3:} Merge the accepting Petri nets into a Petri net $N$. To avoid name clashes, first ensure that the place names and names of silent transitions in the different nets are different. Also, ensure that transitions that have the same label also have the same name (this is possible because the labeling functions are injective). After renaming, create an overall labeled Petri net $N=(P,T,F,l)$ with:
      $P = \bigcup_{\mi{ot} \in \mi{OT}}  P^{\mi{ot}}$,
      $T = \bigcup_{\mi{ot} \in \mi{OT}}  T^{\mi{ot}}$,
      $F = \bigcup_{\mi{ot} \in \mi{OT}}  F^{\mi{ot}}$, and
      $l = \bigcup_{\mi{ot} \in \mi{OT}}  l^{\mi{ot}}$.

  \item \textbf{Step 4:} Assign object types to the places in the merged Petri net $N$:
      $\mi{pt}(p)=\mi{ot}$ for $p \in P^{\mi{ot}}$ and $\mi{ot} \in \mi{OT}$. This is possible because the places for the different object types are disjoint.

  \item \textbf{Step 5:} Identify the variable arcs $F_{\mi{var}}\subseteq F$. This can be determined in different ways (e.g., using replay results or diagnosing the flattening process). The goal is to identify the arcs where multiple tokens need to be consumed or produced. An example would be $F_{\mi{var}} = \{(p,t)\in F \cap (P \times T) \mid \mi{score}(l(t),\mi{pt}(p)) < \tau \} \cup \{(t,p)\in F \cap (T \times P) \mid \mi{score}(l(t),\mi{pt}(p)) < \tau \}$ where $\tau$ is a threshold (e.g., 0.98) and $\mi{score} \in (\univ{act} \times \univ{ot}) \not\rightarrow [0,1]$ such that $\mi{score}(\mi{act},\mi{ot}) = \card{\{e \in E \mid \pim{act}(e)= \mi{act} \ \wedge \ \card{\pim{omap}(e)(\mi{ot})}=1 \}} / \card{\{e \in E \mid \pim{act}(e)= \mi{act}\}}$ is the fraction of $\mi{act}$ events that refer to precisely one object of type $\mi{ot}$.

  \item \textbf{Step 6:} Combining the previous three steps allows us to create an object-centric Petri net $\mi{ON}=(N,\mi{pt},F_{\mi{var}})$. The initial and final markings are obtained by replicating the markings of the accepting Petri nets for each of the corresponding objects.
  $M_{\mi{init}} = [(p,\mi{oi}) \in Q_{\mi{ON}} \mid \exists_{\mi{ot} \in \mi{OT}}\ p \in M_{\mi{init}}^{\mi{ot}} \ \wedge \ \exists_{e\in E}\ \mi{oi} \in \pim{omap}(e)(\mi{pt}(p))]$.
  $M_{\mi{final}} = [(p,\mi{oi}) \in Q_{\mi{ON}} \mid \exists_{\mi{ot} \in \mi{OT}}\ p \in M_{\mi{final}}^{\mi{ot}} \ \wedge \ \exists_{e\in E}\ \mi{oi} \in \pim{omap}(e)(\mi{pt}(p))]$.

  \item \textbf{Step 7:} Return the accepting object-centric Petri net $\mi{AN}=(\mi{ON},M_{\mi{init}},M_{\mi{final}})$.

\end{itemize}

The above approach has two parameters: (1) the discovery technique used in Step 2 and (2) the selection of variable arcs in Step 5 (e.g., threshold $\tau$ and function $\mi{score}$).
For Step 2 any discovery technique that produces a Petri net without duplicate labels can be used (e.g., region-based techniques without label splitting or the inductive mining techniques).
The scoring function described in Step 5 is just an example. Function $\mi{score}(\mi{act},\mi{ot})$ counts the fraction of $\mi{act}$ events that refer to precisely one object of type $\mi{ot}$.
If this is rather low (below the threshold $\tau$), then the corresponding arcs are considered to be variable (i.e., these arcs can consume/produce any number of tokens).
The approach always returns a well-formed object-centric Petri net because the selection of $F_{\mi{var}}$ depends on the transition and place type only.

\subsection{Annotations, Views, and Extractions}
\label{sec:viewdiscoopn}

The main novelty of the work presented in this paper is that we discover a single process model with multiple object types allowing us to capture multiple one-to-many and many-to-many relationships in event data.
Based on this, many ideas from traditional process mining can be converted to this more realistic setting. In this section, we briefly discuss a few.

It is rather straightforward to annotate process models with frequency information and time information. For example, the right-hand side of Figure~\ref{f-example-pm-5} is already showing various frequencies
and our implementation provides much more diagnostics.
\begin{itemize}
  \item \textbf{Transition annotations:} The frequency of a transition shows how often the corresponding activity occurred in the object-centric event log. It is also possible to add statistics about the objects involved in the corresponding events (e.g., how many objects of a particular type were involved on average ).
  If there is transactional information (start and complete), it is also possible to show information about the duration of the corresponding activity  (average, median, variance, minimum, maximum, etc.).
  \item \textbf{Place annotations:} It is possible to show how many tokens have been consumed from and produced for each place. These tokens correspond to objects. Hence, it is also possible to show how many unique objects visited the place and what the average number of visits per object is. By taking the time difference between the moment a token is produced and consumed, it is possible to show timing information (average, median, variance, minimum, maximum, etc.). In case of compliance checking, one can also show missing and remaining tokens (see implementation).
  \item \textbf{Arc annotations:} There are two types of arcs: the variable arcs $F_{\mi{var}}$ and the non-variable $F \setminus F_{\mi{var}}$. Both can be annotated with frequency and time information. For variable arcs, we can also show statistics about the numbers of tokens produced/consumed per transition occurrence. See Figure~\ref{f-example-pm-5}, where the annotations for variable arcs show averages. For example, annotation $100 \times 5$ shows that 100 times a multiset of tokens was moved along the arc and the average size of this multiset was 5, indicating that 500 objects were moved along the arc.
\end{itemize}

Next to adding annotations, it is also possible to select or deselect object types. The approach described in Section~\ref{sec:appdiscoopn}
first identifies the object types $\mi{OT} \subseteq \univ{ot}$ appearing in the event log. However, we can take any nonempty subset $\mi{OT}' \subseteq \mi{OT}$.
It is, for example, possible to leave out the object type \emph{Order} and only use the types \emph{Item} and \emph{Route}. This way it is possible to create simplified \emph{views}.
Everything can also be combined with frequency-based filtering, i.e., adding sliders to seamlessly remove infrequent activities and arcs.

Since most process mining techniques cannot handle object-centric event logs, it is valuable to be able to generate classical event logs and apply traditional techniques.
The holistic view provided by the accepting object-centric Petri net serves as a starting point for a more detailed analysis focusing on one object type.
Definition~\ref{def:flat} already showed that it is easy to flatten event logs. It is also possible to take as case identifier combinations of object types.
This can be combined with views and interactive filtering. Of course, one should always be very careful when interpreting such results.
Due to the convergence and divergence problems mentioned before the results may be misleading.
However, the overall accepting object-centric Petri net helps to avoid misinterpretations.

\section{Tool Support for Object-Centric Petri Nets}
\label{sec:impl}

The concepts and techniques discussed have been fully implemented.
In this section, we describe the implementation, the functionalities supported, and evaluate the performance.

\subsection{Implementation}
\label{subsec:impl}

To support the discovery approach presented in this paper (including performance and conformance analysis using token-based replay), we extended \emph{PM4Py}
with an additional Python library \emph{PM4Py-MDL}.\footnote{The software can be downloaded from \url{www.pm4py.org} and \url{https://github.com/Javert899/pm4py-mdl.git}}
The tool can be installed by using the Python Package Installer (PIP) 
(use the command \emph{pip install pm4pymdl}).
Next to discovering object-centric Petri nets, \emph{PM4Py-MDL} can also discover multi-dimensional directly-follows graphs \cite{Alessandro-StarStar-SIMPDA-CEUR2018,Alessandro-MVP-SIMPDA-ext2019}.

Our implementation follows the approach described in this paper.
The discovery of an object-centric Petri net is based on the 
discovery of Petri nets for the single object types.
Then, these Petri nets are merged and annotated.
For the discovery of a Petri net for each of the individual object types,
a sound workflow net is obtained by applying 
the Inductive Miner Directly-Follows process discovery algorithm \cite{sander-scalable-procmin-SOSYM}. However, any discovery technique producing an accepting Petri net can be used.

The token-based replay approach described in \cite{Alessandro-token-based-replay-ATAED19} 
is used to annotate the \emph{places} and the performance on the \emph{arcs}.
This approach improves the approach \cite{anne_confcheck_is} and the implementation is considerably faster.
For each place, the number of \emph{produced} $p$, \emph{consumed} $c$, \emph{remaining} $r$, and \emph{missing} $m$ 
tokens are computed and displayed.
These values are obtained by ``playing the token game'' using the flattened event log $L^{\mi{ot}}$ and accepting Petri net $\mi{SN}^{\mi{ot}}$ for each object type $\mi{ot}$. This is possible because each place has precisely one type.
The numbers $p$ and $c$ refer to the number of produced and consumed tokens (reported per place).
The number of missing tokens $m$ refers to situations where a token is not present in the place 
although the log suggests that the output transition has fired. 
The number of remaining tokens $r$ refers to the tokens that remain after replaying the event log.
Our token-based replay approach is able to deal with silent transitions and duplicate transitions (i.e., the labeling function $l$ is partial or non-injective).
See \cite{anne_confcheck_is,Alessandro-token-based-replay-ATAED19} for details.

For performance-related annotations, 
the sets of delays based on differences between the production times of tokens and the consumption times of tokens are used.
Based on these measurements, minimum, maximum, average, variance, etc.\ can be calculated.

The annotations related to the transitions are derived directly from the event log (i.e., without replaying the event log).
This way we can add the frequencies of transitions, the average number of objects involved, and the number of unique objects to the model.

\subsection{Functionalities of the Tool}

Figure~\ref{fig:screenshotsWebInterface} shows two screenshots of our \emph{PM4Py-MDL} tool. 
The following functions are supported:
\begin{itemize}
\item \emph{Importing} and \emph{exporting} of object-centric event logs in different formats. 
The currently supported formats are Multi-Dimensional Logs (MDL), Parquet and XOC (format connected to OCBC models).
\item A range of \emph{object-centric process discovery approaches} are supported. 
There are also several target formats next to the object-centric Petri nets introduced in this paper. 
Also Multiple ViewPoint (MVP) models are supported. These are essentially Directly Follows Graphs \cite{centeris-keynote2019} with colored arcs, see \cite{Alessandro-StarStar-SIMPDA-CEUR2018,Alessandro-MVP-SIMPDA-ext2019}.
The approach presented in the paper can be combined with different low-level discovery techniques.
In the examples, we use the Inductive Miner Directly-Follows process discovery algorithm \cite{sander-scalable-procmin-SOSYM}.
\item It is possible to set various \emph{thresholds} to influence the discovery process, e.g., the minimal number of occurrences for activities and paths. It is also possible to specify, for each object type, the activities that are considered for that type.
\item Several methods to \emph{explore} the raw event data are provided (e.g., statistics on the number of related objects per type and distribution of events over time). These annotations can be attached to places, transitions, and arcs.
\item Token-based replay is supported for \emph{performance} and \emph{conformance} analysis. This allows for the identification of bottlenecks and deviating behavior. 
\item It is possible to \emph{filter} based on activities, paths, number of related objects per type. Also, timeframe and attribute-based filters are supported.
\item There is support for \emph{clustering} and event \emph{correlation} based on event graphs.
\item At any point in time, it is possible to \emph{flatten} an object-centric event log onto a \emph{traditional event log} by selecting an object type. The resulting event log can be analyzed using conventional process mining techniques.
\end{itemize}
\begin{figure*}[!t]
\centering
\begin{minipage}{0.45\textwidth}
\resizebox{\columnwidth}{!}{%
\includegraphics{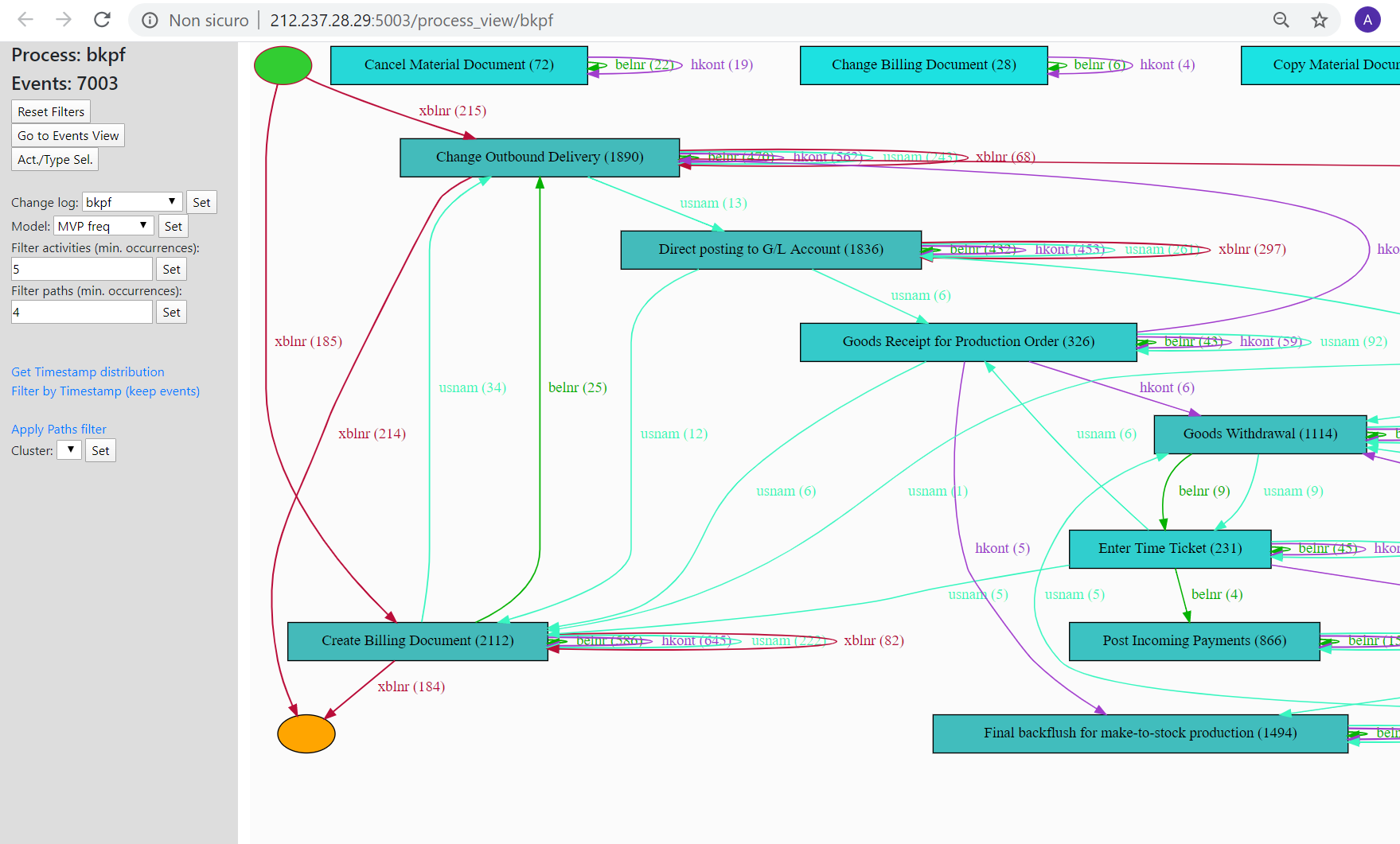}
}
\end{minipage}
\begin{minipage}{0.45\textwidth}
\resizebox{\columnwidth}{!}{%
\includegraphics{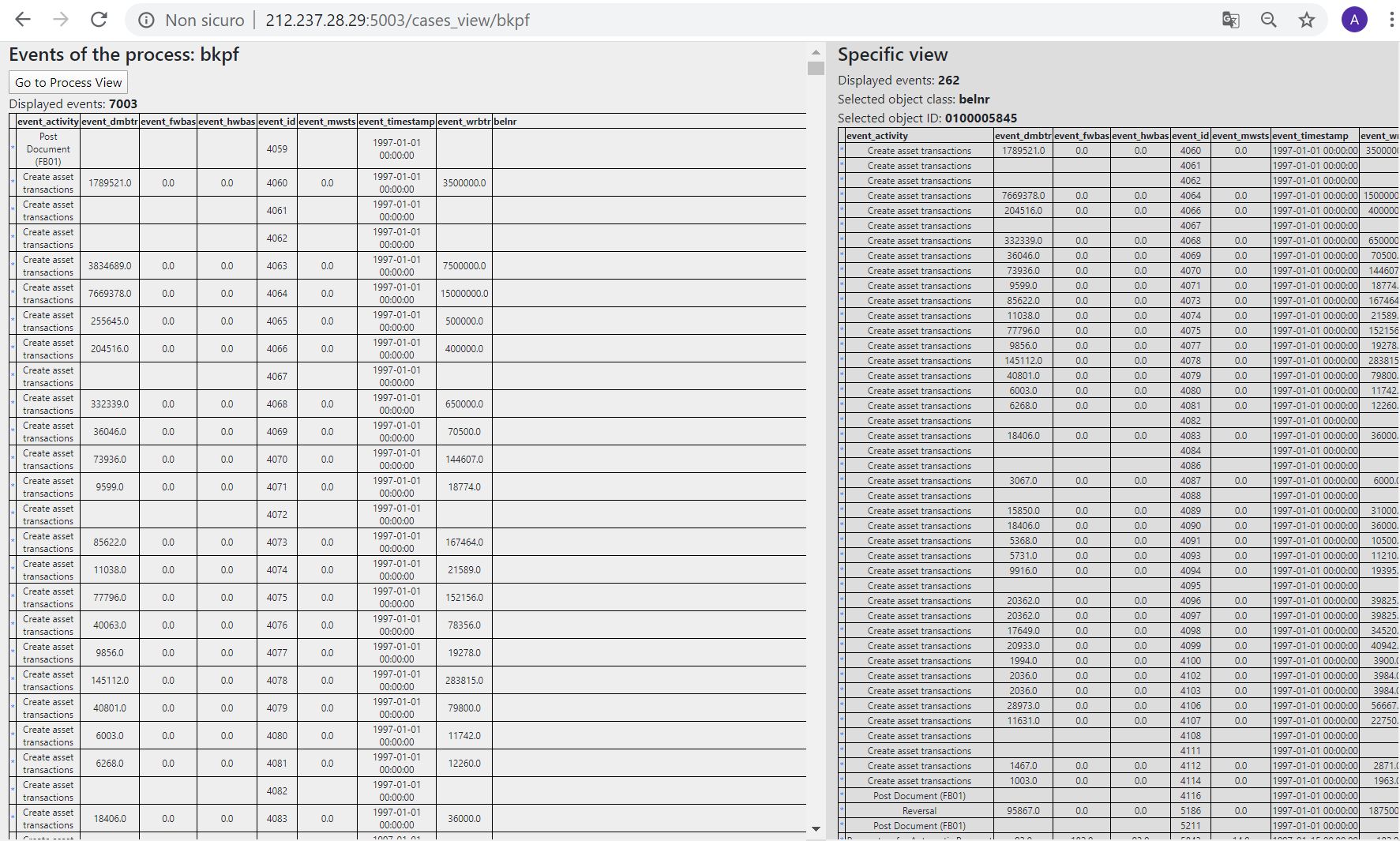}
}
\end{minipage}
\caption{Web interface that is supporting the functionalities offered by the \emph{PM4Py-MDL} library. The main components are the process discovery (left) and event exploration (right) ones.}
\label{fig:screenshotsWebInterface}
\end{figure*}

The web interface is organized mainly in two different components: process discovery and event exploration (see Figure~\ref{fig:screenshotsWebInterface}).
The visualization is highly interactive.
The nodes are clickable in such a way that the statistics about the events of such activity can be inspected and filtering options can be set.
The event exploration shows the events of the log in an interactive way.
It is possible to interact with the related objects and show all the events related to an object in another panel. This way the understanding the lifecycle of objects is facilitated.
Next, we evaluate the scalability of the approach and the implementation.

\subsection{Scalability of the Approach and Implementation}

The aim of this subsection is to analyze the scalability of the discovery of 
object-centric Petri nets as implemented in the \emph{PM4Py-MDL} tool.
We expect the discovery of object-centric Petri nets to be scalable, 
because the steps that are involved have at most linear complexity,
excluding the application of the process discovery algorithm 
on the flattened logs. 
Moreover, we also support discovery techniques that are linear in the event log (given a bounded number of activities).

To analyze scalability, we use variants of the ``running-example'' object-centric event log also used in other parts of the paper. Three different settings have been examined:
\begin{enumerate}
\item The execution time of the algorithm in terms of \emph{the number of events} in the event log (while keeping the number of unique activities and the number of objects per event constant).
\item The execution time of the algorithm in terms of \emph{the number of unique activities} in the event log (while keeping the number of events and the number of objects per event constant).
\item The execution time of the algorithm in terms of \emph{the number of objects per event} (while keeping the number of unique activities and the number of events constant).
\end{enumerate}
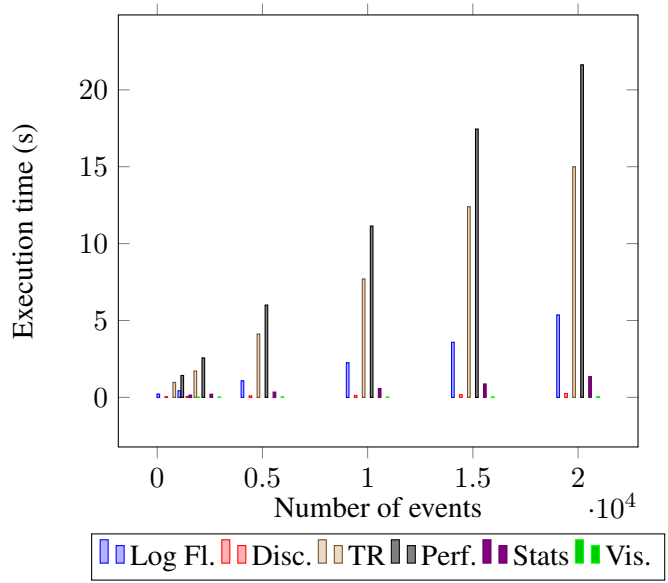
\begin{figure*}[!b]
\centering
\begin{tikzpicture}
\begin{axis}[
ylabel=Execution time (s),
xlabel=Number of events,
enlargelimits=0.15,
legend style={at={(0.5,-0.20)},
anchor=north,legend columns=-1},
ybar,
bar width=1pt,
]
\legend{Log Fl.,Disc.,TR,Perf.,Stats,Vis.}
\addplot
    coordinates { (1000.0, 0.217)  (2000.0, 0.431)  (5000.0, 1.078)  (10000.0, 2.255)  (15000.0, 3.582)  (20000.0, 5.361) };

\addplot
    coordinates { (1000.0, 0.036)  (2000.0, 0.048)  (5000.0, 0.095)  (10000.0, 0.125)  (15000.0, 0.172)  (20000.0, 0.251) };

\addplot
    coordinates { (1000.0, 0.973)  (2000.0, 1.714)  (5000.0, 4.121)  (10000.0, 7.701)  (15000.0, 12.399)  (20000.0, 15.003) };

\addplot
    coordinates { (1000.0, 1.424)  (2000.0, 2.566)  (5000.0, 6.005)  (10000.0, 11.143)  (15000.0, 17.452)  (20000.0, 21.63) };

\addplot
    coordinates { (1000.0, 0.147)  (2000.0, 0.205)  (5000.0, 0.344)  (10000.0, 0.578)  (15000.0, 0.873)  (20000.0, 1.355) };

\addplot
    coordinates { (1000.0, 0.016)  (2000.0, 0.016)  (5000.0, 0.031)  (10000.0, 0.016)  (15000.0, 0.031)  (20000.0, 0.047) };
\end{axis}
\end{tikzpicture}
\caption{Detailed analysis of the overall execution time of the approach when increasing the number of events of the log. The measurements are grouped for the five sublogs. The columns inside a group represent
event log flattening (\textbf{Log Fl.}), discovery (\textbf{Disc.}), token-based replay (\textbf{TR}), computing performance annotations (\textbf{Perf.}), computing statistics (\textbf{Stats}), and  visualization (\textbf{Vis.}).}
\label{fig:dissection}
\end{figure*}

\subsubsection{Increasing the Number of Events}
\label{sec:increasingNumberEvents}

Figures~\ref{fig:dissection} and \ref{fig:execTimeAct}(a) show the relationship between the overall execution time and the number of events in the object-centric event log.
Figure \ref{fig:execTimeAct}(a) shows a linear relationship between the number of events and the execution time.
Our initial ``running-example'' log contains 22,367 events. Different subsets of different sizes are taken 
such that the set of unique activities remains constant (we just consider fewer orders). In other words, the process is observed over shorter time periods. Analyzing the whole log takes less than a minute.
This may seem long for a relatively small event log. However, the time needed for discovery is less than a second.
Figure~\ref{fig:dissection} splits the analysis time into six different components:
\begin{itemize}
\item The time needed for the log flattening operations for all event logs (\textbf{Log Fl.}).
\item The time needed for the process discovery operations (\textbf{Disc.}). In these experiments, we use the inductive miner.
\item The time needed for the token-based replay operations (\textbf{TR}).
\item The time needed for computing the performance annotations based on the results of the token-based replay (\textbf{Perf.}).
\item The time needed for the calculation of additional statistics from the log (\textbf{Stats}).
\item The time needed for the visualization (\textbf{Vis.}).
\end{itemize}
Figure~\ref{fig:dissection} clearly shows that most time is spent on the token-based replay operations (\textbf{TR})
and the computation of the performance annotations (\textbf{Perf.}).
The first is done per control-flow variant (to avoid repeatedly solving the same problem) and the second one per object.
This explains why \textbf{Perf.} takes more time than \textbf{TR}.
Also, the event log preprocessing (\textbf{Log Fl.}) takes substantial time.
Interestingly, the discovery itself is very fast compared to the other components.
\pgfplotstableread[col sep=space,row sep=newline,header=true]{
x   y
1000 2.813
2000 4.98
5000 11.674
10000 21.818
15000 34.509
20000 43.647
}\ocpnNEvents
\pgfplotstableread[col sep=space,row sep=newline,header=true]{
x   y
1 5.26
2 6.74
3 7.02
4 9.25
5 9.85
6 10.342
7 10.9
8 12.18
9 13.257
10 13.564
11 13.581
}\ocpnNActivities
\pgfplotstableread[col sep=space,row sep=newline,header=true]{
x   y
1.0 11.06
1.45 15.240
2.12 17.418
3.24 25.89
4.35 30.782
5.47 35.56
6.59 40.314
7.71 42.933
8.82 43.281
9.22 44.532
}\ocpnNObjectsEvent
\begin{figure*}[!t]
\centering
\begin{minipage}{0.45\textwidth}
\resizebox{\columnwidth}{!}{%
\begin{tikzpicture}
  \begin{axis}[xlabel={Number of events}, ylabel={Execution time (s)}]
  \addplot[smooth] table {\ocpnNEvents};
\addplot[only marks,mark=*,mark options={color=red}] table {\ocpnNEvents};
\end{axis}
\end{tikzpicture}
} \\
(a) Execution time while increasing the number of events.
\end{minipage}
\begin{minipage}{0.45\textwidth}
\resizebox{\columnwidth}{!}{%
\begin{tikzpicture}
  \begin{axis}[xlabel={Number of activities}, ylabel={Execution time (s)}]
    \addplot[smooth] table {\ocpnNActivities};
\addplot[only marks,mark=*,mark options={color=blue}] table {\ocpnNActivities};
\end{axis}
\end{tikzpicture}
} \\
(b) Execution time while increasing the number of activities.
\end{minipage} \\
\vspace{5mm}
\begin{minipage}{0.45\textwidth}
\resizebox{\columnwidth}{!}{%
\begin{tikzpicture}
  \begin{axis}[xlabel={Objects per Event}, ylabel={Execution time (s)}]
    \addplot[smooth] table {\ocpnNObjectsEvent};
\addplot[only marks,mark=*,mark options={color=green}] table {\ocpnNObjectsEvent};
\end{axis}
\end{tikzpicture}
} \\
c) Execution time while increasing the number of objects per event.
\end{minipage}
\caption{Scalability assessment of the object-centric Petri nets discovery algorithm. The different graphs show the overall time (including replay and annotation) when varying of the number of events, the number of activities, and the number of objects per event.}
\label{fig:execTimeAct}
\end{figure*}

Figures~\ref{fig:dissection} and \ref{fig:execTimeAct}(a) show that the characteristics of our approach are similar to process mining on classical event logs. It takes more time to replay the event log to collect conformance and performance statistics than to discover the process model using techniques such as the inductive miner. 
This also holds for traditional process mining techniques using a single case notion.

\subsubsection{Increasing the Number of Activities}
\label{sec:increasingNumberActivities}

Figure~\ref{fig:execTimeAct}(b) shows the execution time when increasing the number of unique activities.
The event logs used were created using activity filtering while keeping the number of events constant.
Table~\ref{tab:tableNumberActivities} shows the number of activities, the number of events, and the overall time needed. In row $k$, the $k$ most frequent activities are retained and the event log is further filtered to have precisely 8159 events. 
The growth in overall computation time is explained by the fact that the most expensive operations are the token-based replay and computing the performance annotations, and the complexity of these operations grows linearly with the average
length of the trace.\footnote{Token-based replay, in contrast to other approaches such as alignments,
does not suffer from the increase of the size of the trace, since decisions are made locally.}
\begin{table}[ht]
\label{tab:tableNumberActivities}
\caption{The execution time while increasing the number of unique activities.}
\centering
\resizebox{0.50\columnwidth}{!}{%
\begin{tabular}{|c|c|c|}
\hline
{\bf Number of Activities} & {\bf Number of events} & {\bf Execution time} \\
\hline
1 & 8159 & 5.26 \\
2 & 8159 & 6.74 \\
3 & 8159 & 7.02 \\
4 & 8159 & 9.25 \\
5 & 8159 & 9.85 \\
6 & 8159 & 10.34 \\
7 & 8159 & 10.90 \\
8 & 8159 & 12.18 \\
9 & 8159 & 13.26 \\
10 & 8159 & 13.56 \\
11 & 8159 & 13.58 \\
\hline
\end{tabular}}
\end{table}

\subsubsection{Increasing the Number of Related Objects Per Event}

Figure~\ref{fig:execTimeAct}(c) shows the execution time when the number of related objects per event is increased.
To analyze such a setting, different subsets of the ``running-example'' event log were created in such a way that the number of events and the number of different activities does not change.
The set of related objects is selected such that 
at least one related object (of any type) remains for each event.
The linear relation is as expected, since events are replicated for each object during analysis.
Experiments also confirm that there is a linear relationship between the overall analysis time and the number of object \emph{types} (not shown).

Overall, the results are very encouraging. \emph{The approach scales linear in the number of events, the number of unique activities, and the number of objects.} The discovery times are negligible compared to the time needed for conformance checking and performance analysis. Hence, the approach can be applied to real-world event data.

\begin{table}[thb!]
\caption{Fragment of a larger object-centric event log with 22,367 events and five object types:
\emph{orders},
\emph{items},
\emph{products},
\emph{customers},
\emph{packages}.
There are 2000 different orders, 8159 items, 20 products, 17 customers, and 1325 packages.
The table shows a few sample events and the first three object types.}
\label{tab:mdlLogView}
\centering
\resizebox{0.99\columnwidth}{!}{%
\begin{tabular}{|l|l|l|l|l|}
\hline
    event\_activity &     event\_timestamp &                                             orders &                                              items &                                           products \\
\hline
       place order & 2019-05-20 09:07:47 &                                         ['990001'] &           ['880001', '880002', '880003', '880004'] &  ['Echo Show 8', 'Fire Stick 4K', 'Echo', 'Echo... \\
       place order & 2019-05-20 10:35:21 &                                         ['990002'] &           ['880005', '880006', '880007', '880008'] &      ['iPad', 'Kindle', 'iPad Air', 'MacBook Air'] \\
         pick item & 2019-05-20 10:38:17 &                                         ['990002'] &                                         ['880006'] &                                         ['Kindle'] \\
     confirm order & 2019-05-20 11:13:54 &                                         ['990001'] &           ['880001', '880002', '880003', '880004'] &  ['Echo Show 8', 'Fire Stick 4K', 'Echo', 'Echo... \\
         pick item & 2019-05-20 11:20:13 &                                         ['990001'] &                                         ['880002'] &                                  ['Fire Stick 4K'] \\
       place order & 2019-05-20 12:30:30 &                                         ['990003'] &           ['880009', '880010', '880011', '880012'] &  ['iPad Air', 'iPhone 11', 'Fire Stick', 'iPhon... \\
     confirm order & 2019-05-20 12:34:16 &                                         ['990003'] &           ['880009', '880010', '880011', '880012'] &  ['iPad Air', 'iPhone 11', 'Fire Stick', 'iPhon... \\
 item out of stock & 2019-05-20 13:54:37 &                                         ['990001'] &                                         ['880004'] &                                    ['Echo Studio'] \\
       place order & 2019-05-20 14:20:47 &                                         ['990004'] &                               ['880013', '880014'] &                     ['Echo Studio', 'Echo Show 8'] \\
 item out of stock & 2019-05-20 15:19:49 &                                         ['990003'] &                                         ['880009'] &                                       ['iPad Air'] \\
       place order & 2019-05-20 16:01:22 &                                         ['990005'] &                               ['880015', '880016'] &                           ['iPad Pro', 'iPad Air'] \\
         pick item & 2019-05-20 16:56:02 &                                         ['990004'] &                                         ['880014'] &                                    ['Echo Show 8'] \\
         pick item & 2019-05-20 17:08:25 &                                         ['990002'] &                                         ['880008'] &                                    ['MacBook Air'] \\
       place order & 2019-05-20 17:22:31 &                                         ['990006'] &                     ['880017', '880018', '880019'] &       ['Echo Show 8', 'Fire Stick 4K', 'iPhone X'] \\
         pick item & 2019-05-20 17:51:15 &                                         ['990003'] &                                         ['880011'] &                                     ['Fire Stick'] \\
         pick item & 2019-05-20 18:15:00 &                                         ['990002'] &                                         ['880007'] &                                       ['iPad Air'] \\
     confirm order & 2019-05-20 18:36:37 &                                         ['990004'] &                               ['880013', '880014'] &                     ['Echo Studio', 'Echo Show 8'] \\
       place order & 2019-05-20 19:04:49 &                                         ['990007'] &                     ['880020', '880021', '880022'] &   ['Echo Show 8', 'Echo Dot', 'Kindle Paperwhite'] \\
$\ldots$ & $\ldots$ & $\ldots$ & $\ldots$ & $\ldots$ \\
 \hline
\end{tabular}
}
\end{table}

\section{Example Application}
\label{sec:appl}

To illustrate the feasibility of the approach and corresponding \emph{PM4Py-MDL} implementation, we use the larger example briefly mentioned in the introduction (see Figure~\ref{f-intro}).
The object-centric event log in CSV format can be obtained from \url{https://github.com/Javert899/pm4py-mdl/blob/master/example_logs/mdl/mdl-running-example.mdl}.
A small fragment of the log, showing three selected object types, is visualized in Table~\ref{tab:mdlLogView}. It can be considered to be an extension of the smaller examples used before.
In total, there are 22,367 events. There are five object types:
\emph{orders},
\emph{items},
\emph{products},
\emph{customers},
\emph{packages}.
The event log contains information about 2000 different orders, 8159 items, 20 products, 17 customers, and 1325 packages.
Hence, the average number of items in one order is 4.08 and the average number of items in one package is 6.16.

We can filter out specific ``activity - object type'' $(a,\mi{ot})$ combinations. This corresponds to removing objects related to activity $a$ and object type $\mi{ot}$.
In Table~\ref{tab:mdlLogView}, we removed all objects related to \emph{customers} and \emph{packages} for all activities.
This boils down to removing the columns with customer and package information.
We can also remove the rows related to certain activities. However, we can also use more fine-grained filtering where we keep specific ``activity - object type'' combinations.
\begin{table}
\caption{The first two columns show the ``activity - object type'' combinations used for analysis. 
For example, \emph{place order} events also had information about products and customers, but these object types were removed.
\emph{failed delivery} events also had information about orders, products, and customers, but these were removed. Etc.
The last three columns show statistics for the ``activity - object type'' combinations in the original event log (only for the object types \emph{orders}, \emph{items}, and \emph{packages}).
The three values are reported: the minimum number of objects / the average number of objects / the maximum number of objects.}
\label{tab:mdlObjTypes}
\centering
\resizebox{0.8\columnwidth}{!}{
\begin{tabular}{|l|c||c|c|c|}
\hline
Activity & Retained object types & Orders per event & Items per event & Packages per event \\
\hline
place order & orders, items & 1~/~1.00~/~1 & 1~/~4.08~/~15 & 0~/~0.00~/~0 \\
confirm order & orders, items & 1~/~1.00~/~1 & 1~/~4.08~/~15 & 0~/~0.00~/~0 \\
item out of stock & items & 1~/~1.00~/~1 & 1~/~1.00~/~1 & 0~/~0.00~/~0 \\
reorder item & items & 1~/~1.00~/~1 & 1~/~1.00~/~1 & 0~/~0.00~/~0 \\
pick item & items & 0~/~0.00~/~0 & 1~/~1.00~/~1 & 0~/~0.00~/~0 \\
payment reminder & orders & 1~/~1.00~/~1 & 1~/~4.18~/~14 & 0~/~0.00~/~0 \\
pay order & orders & 1~/~1.00~/~1 & 1~/~4.08~/~15 & 0~/~0.00~/~0 \\
create package & items, packages & 1~/~3.32~/~9 & 1~/~6.16~/~22 & 1~/~1.00~/~1 \\
send package & items, packages & 1~/~3.32~/~9 & 1~/~6.16~/~22 & 1~/~1.00~/~1 \\
failed delivery & items, packages & 1~/~3.21~/~8 & 1~/~5.95~/~18 & 1~/~1.00~/~1 \\
package delivered & items, packages & 1~/~3.31~/~9 & 1~/~6.16~/~22 & 1~/~1.00~/~1 \\
\hline
\end{tabular}
}
\end{table}

Table~\ref{tab:mdlObjTypes} shows example statistics for the 22,367 events in the original object-centric event log.
For each activity, the minimum number of objects, the average number of objects, and the maximum number of objects of a given type are indicated.
For example, \emph{place order} events always refer to precisely one order object and a variable number of item objects (minimum=1, average=4.08, maximum=15) and
\emph{send package} events always refer to precisely one package object,  a variable number of item objects (minimum=1, average=6.16, maximum=22),
 and a variable number of order objects (minimum=1, average=3.32, maximum=9).
\begin{figure}[htbp]
\centering
\fbox{\includegraphics[height=0.9\textheight]{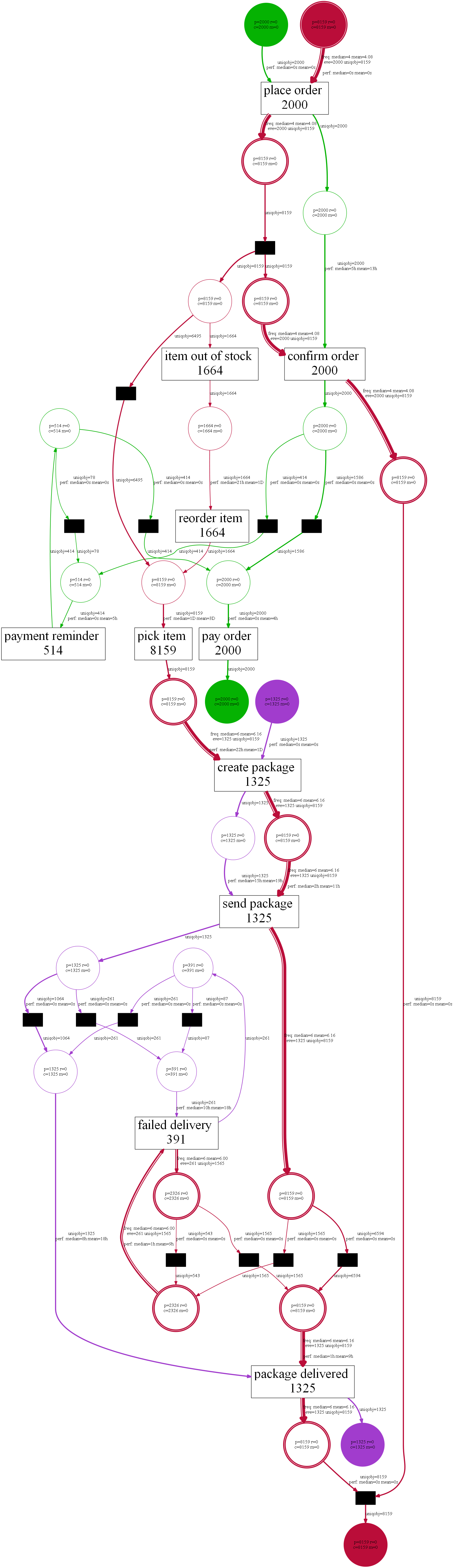}}
\caption{Object-centric Petri net discovered based on the example log considering three object types:  \emph{orders} (green), \emph{items} (red), and \emph{packages} (violet).}
\label{fig:petri1}
\end{figure}

For our running example, we considered the ``activity - object type'' combinations depicted in Table~\ref{tab:mdlObjTypes},
i.e., we retain object types \emph{orders}, \emph{items}, and \emph{packages}, keep all activities, but remove less relevant object types for some of the activities.
Starting from the event log in Table~\ref{tab:mdlLogView} and the ``activity - object type'' combinations in Table~\ref{tab:mdlObjTypes},
our discovery approach returns the object-centric Petri net shown in Figure~\ref{fig:petri1}.

The overall figure is hardly readable. However, we can use the filtering approaches discussed and seamlessly simplify the model (e.g., removing infrequent activities and selecting fewer object types).
Moreover, we can zoom in on the different aspects of the model.
\begin{figure}[htbp]
\centering
\fbox{\includegraphics[width=0.40\textwidth]{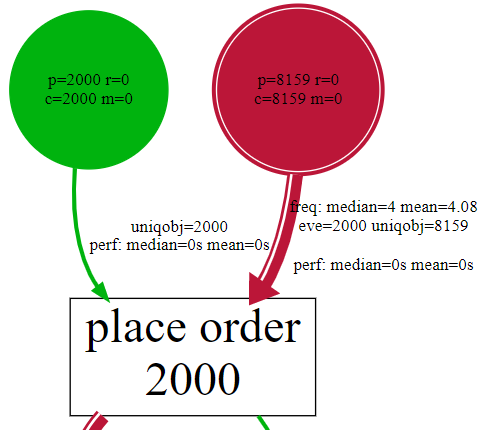}}
\caption{Fragment of the model showing the \emph{place order} activity. There are 2000 unique orders in the log and \emph{place order} occurs for each of them once.
There are 8159 unique items distributed over the 2000 orders. The diagnostics show that, on average, 4.08 item objects are consumed from the red place of type \emph{items}.}
\label{fig:orderPlacement}
\end{figure}
\begin{figure}[htbp]
\centering
\fbox{\includegraphics[width=0.40\textwidth]{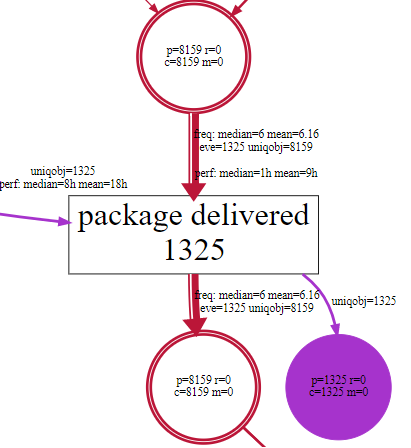}}
\caption{Fragment of the model showing the \emph{package delivered} activity. This activity corresponds to the successful delivery of packages composed of multiple items.
There are 8159 unique items distributed over 1325 packages. All packages were delivered as reflected by the frequency of \emph{package delivered}.
The mean number of item objects consumed and produced by the transition is 6.16.
The number of package objects consumed and produced by the transition is always 1. Also, the average times are reported.}
\label{fig:packageDelivery}
\end{figure}
\begin{figure}[htbp]
\centering
\fbox{\includegraphics[width=0.40\textwidth]{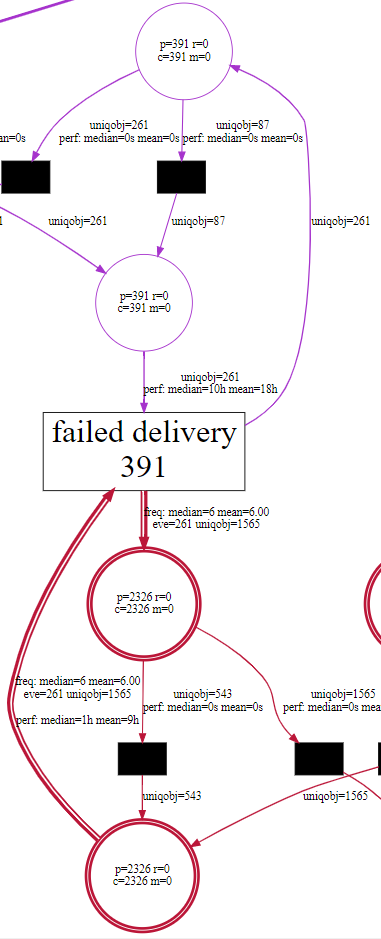}}
\caption{Fragment of the model showing the \emph{failed delivery} activity.
There were 391 failed deliveries, 261 packages had at least one failed delivery,
and 1565 items out of the 8159 where involved in at least one failed delivery.
87 packages containing 543 items had a failed delivery two or more times.}
\label{fig:failedDelivery}
\end{figure}

Figure~\ref{fig:orderPlacement} shows a fragment of the larger object-centric Petri net in Figure~\ref{fig:petri1}.
The green source place is of type \emph{orders}.
The red source place is of type \emph{items}.
Activity \emph{place order} occurred 2000 times, consuming precisely one token from the green place and a variable number of tokens from the red place.
The compound double arrow reflects this, and the inscription shows that on average 4.08 item objects were consumed.

Figure~\ref{fig:packageDelivery} shows another fragment. The {\it package delivered} activity is the final activity of the life-cycle of both packages and items.
The two compound double arrows denote that variable numbers of item objects are consumed and produced.
The mean number of item objects consumed and produced by {\it package delivered} is 6.16.
The annotations on the arcs tell that the average time from the previous activity for packages to this activity is 18 hours.
The average time from the previous activity for items to this activity is 9 hours.

Figure~\ref{fig:failedDelivery} zooms in on the failed deliveries. There were 391 failed deliveries involving 261 packages (87 failed multiple times) and 1565 items.

All the places in Figure~\ref{fig:petri1} also show replay information: $p$ is the number of tokens produced for the place,
$c$ is the number of tokens consumed, $m$ is the number of tokens missing, $r$ is the number of tokens remaining (see Section~\ref{subsec:impl}).
In this example, the model fits perfectly. Therefore, $m=0$ and $r=0$ for all places.

From the discovered object-centric Petri net, we can also generate simpler views, create traditional event logs, and deploy traditional process mining techniques for further analysis.
Moreover, the example shows many insights that could not have been discovered using traditional approaches.
By flattening the event log, the relations between different types of objects would be lost.
Moreover, any attempt to look at different types of objects would result in non-existing loops and misleading frequency/time diagnostics.

This section highlighted the main advantages of using our approach over traditional models using 
a separate process model for each object type. To summarize:
\begin{itemize}
\item Figure~\ref{fig:petri1} (and the corresponding model fragments) provides an overview of the \emph{whole} process and the \emph{interactions} between the different object types. When considering each object type as a separate case notion, we get multiple \emph{disconnected} models that do \emph{not} show these interactions.
\item \emph{Deficiency}, \emph{convergence}, and \emph{divergence} problems are avoided (cf.\ Definition~\ref{def:fltproblems}). All events are taken into count precisely once, i.e., events do not disappear and are not replicated unintentionally. Moreover, artificial loops due to divergence are avoided.
\item Using token-based replay, we are able to project \emph{performance and conformance information} onto one overall model. Most of the statistics would not be visible in the flattened process models (e.g., the average number of objects involved in an activity).
\end{itemize}

\section{Related Work}
\label{sec:rw}

This section discusses 
traditional process mining techniques using a single case notion,
modeling approaches dealing with multiple object types, and
process mining approaches dealing with multiple object types.

\subsection{Traditional Process Mining Techniques Using a Single Case Notion}

In the introduction, we mentioned several process discovery approaches based on 
classical event logs using a single case notion \cite{process-mining-book-2016}.
Many of these techniques discover classical Petri nets (e.g., place transition nets), e.g., region-based approaches can be used to derive places
\cite{Badouel-Darondeau-book-regions-2015,BaDa98,lorenz_BPM2007,DBLP:journals/fuin/BergenthumDLM08,DBLP:journals/fuin/BergenthumDML09,DBLP:conf/bpm/CarmonaCK08,DBLP:journals/tc/CarmonaCK10,DBLP:conf/apn/CarmonaCKKLY08,Cortadella98,zero-safe-nets-synthesis,boudewijn_runs_ToPNoC-special-issue-PN2011,Jetty-TCS-2012,Jetty-TCS-2017,Lorenz_ACSD_07,models-from-scenarios-ToPNoC,lorenz_atpn_2006,Lorenz_WSC_07,carmona-PN2010,bas-ilp-computing}.
The region-based process discovery techniques are just a subset of all approaches to derive process models from event logs.
The inductive mining techniques \cite{sander-infreq-bpi2013-lnbip2014,sander-scalable-BPMDS2015}
and the so-called split miner \cite{split-miner} are examples of the state-of-the-art techniques to learn process models.
Commercial systems tend to use the Directly Follows Graph (DFG) having the obvious limitations explained in \cite{centeris-keynote2019}.
All of the above approaches assume a single case notion.
This is consistent with traditional process models ranging from \emph{workflow nets} \cite{aaljcsc,soundness-FACS} and \emph{process trees} \cite{sander-tree-disc-PN2013} to \emph{Business Process Modeling Notation (BPMN) models} \cite{BPMN-OMG-2-formal} and \emph{Event-driven Process Chains (EPCs)} \cite{scheer94} which assume a single case notion.

\subsection{Modeling Techniques Using Multiple Object Types}

Although most process models use a single case notion, 
the problem that many processes cannot be captured properly in this way was identified early on.
IBM's \emph{FlowMark} system already supported the so-called ``bundle'' concept to handle cases composed of subcases \cite{flowmarkmanual}.
This is related to the \emph{multiple instance patterns}, i.e., a category of \emph{workflow patterns} identified around the turn of the century \cite{aal_patterns_dapd}.
One of the first process modeling notations trying to address the problem were the so-called \emph{proclets} \cite{aalprocletscoopis,aalprocletsjcis}.
Proclets are lightweight interacting workflow processes. By promoting interactions to first-class citizens, it is possible to model complex workflows in a more natural manner using proclets.

This was followed by other approaches such as the \emph{artifact-centric modeling notations} \cite{Artifact-Centric_BPM2007,CohnH:2009:business_artifacts,compliance-artifacts-bpm2011,NigamC:2003:artifacts}.
See \cite{dirk-artifact-PN2019} for an up-to-date overview of the challenges that arise when instances of processes may interact with each other in a one-to-many
or many-to-many fashion.

\subsection{Process Mining Techniques Using Multiple Object Types}

Most of the work done on interacting processes with converging and diverging instances has focused on developing novel modeling notations and supporting the implementation of such processes.
Only a few approaches focused on the problem in a process mining context. This is surprising since one quickly encounters the problem when applying process mining to ERP systems from SAP, Oracle, Microsoft, and other vendors of enterprise software. This problem was also raised in Section~5.5 of \cite{process-mining-book-2016} which discusses the need to ``flatten'' event data to produce traditional process models.

In \cite{eduardo-BPMDS2016} techniques are described to extract ``non-flat'' event data from source systems and prepare these for traditional process mining.
The \emph{eXtensible Event Stream} (XES) format \cite{XES-standard-2013} is the official IEEE standard for storing event data and supported by many process mining vendors.
XES requires a case notion to correlate events.
Next to the standard IEEE XES format \cite{XES-standard-2013}, new storage formats such as  \emph{eXtensible Object-Centric} (XOC) \cite{Extract-Object-Centric-CF-2018} have been proposed
to deal with object-centric data (e.g., database tables) having one-to-many and many-to-many relations.
The XOC format does not require a case notion to avoid flattening multi-dimensional data. An XOC log can precisely store the evolution of the database along with corresponding events.
 An obvious drawback is that XOC logs tend to be very large.

The approaches described in \cite{BIS-artifactconformance-lnbip2011,zeus-artifact-2011,Xixi-TSC-2015} focus on interacting processes where each process uses its own case identifiers.
In \cite{Xixi-TSC-2015} interacting artifacts are discovered from ERP systems.
In \cite{BIS-artifactconformance-lnbip2011} traditional conformance checking was adapted to check compliance for interacting artifacts.

One of the main challenges is that artifact models tend to become complex and difficult to understand. In an attempt to tackle this problem,
Van Eck et al.\ use a simpler setting with multiple perspectives, each modeled by a simple transition system \cite{DBLP:conf/wecwis/EckSA17,Multi-instance-Mining-BPM-WS-2018}.
These are also called \emph{artifact-centric process models} but are simpler than the models used in  \cite{Artifact-Centric_BPM2007,CohnH:2009:business_artifacts,BIS-artifactconformance-lnbip2011,zeus-artifact-2011,compliance-artifacts-bpm2011,NigamC:2003:artifacts,Xixi-TSC-2015}.
The state of a case is decomposed onto one state per perspective, thus simplifying the overall model.
Relations between sub-states are viewed as correlations rather than explicit causality constraints.
Concurrency only exists between the different perspectives and not within an individual perspective.
In a recent extension, each perspective can be instantiated multiple times, i.e., many-to-many relations between artifact types can be visualized \cite{Multi-instance-Mining-BPM-WS-2018}.

The above techniques have the drawback that the overall process is not visualized in a single diagram, but shown as a collection of interconnected diagrams using different (sub-)case notions.
The so-called \emph{Object-Centric Behavioral Constraint} (OCBC) models address this problem and also incorporate the data perspective in a single diagram \cite{DBLP:conf/dlog/AalstAMT17,DBLP:journals/corr/AalstLM17,OCBC-festschrift-Guarino,BIS-OCBCdisc-lnbip2017}.
OCBC models extend data models with a behavioral perspective. Data models can easily deal with many-to-many and one-to-many relationships. This is exploited to create process models that can also model complex interactions between different types of instances. Classical multiple-instance problems are circumvented by using the data model for event correlation.
Activities are related to the data perspective and have ordering constraints inspired by declarative languages like \emph{Declare} \cite{declareCSRD09}.
Instead of LTL-based constraints, simpler cardinality constraints are used.
Several discovery techniques have been developed for OCBC models \cite{BIS-OCBCdisc-lnbip2017}.
It is also possible to check conformance and project performance information on such models.
OCBC models are appealing because they faithfully describe the relationship between behavior and data and are able to capture all information in a single integrated diagram.
However, OCBC models tend to be too complex and the corresponding discovery and conformance checking techniques are not very scalable.

The complexity and scalability problems of OCBC models led to the development of the so-called \emph{Multiple ViewPoint (MVP) models}, earlier named StarStar models \cite{Alessandro-StarStar-SIMPDA-CEUR2018,Alessandro-MVP-SIMPDA-ext2019}.
MVP models are learned from data stored in relational databases. Based on the relations and timestamps in a traditional database, first, a so-called E2O graph is built that relates events and objects.
Based on the E2O graph, an E2E multigraph is learned that relates events through objects. Finally, an A2A multigraph is learned to relate activities.
The A2A graph shows relations between activities and each relation is based on one of the object classes used as input.
This is a very promising approach because it is simple and scalable. The approach to discover object-centric Petri nets can be seen as a continuation of the work in \cite{Alessandro-StarStar-SIMPDA-CEUR2018,Alessandro-MVP-SIMPDA-ext2019}.

Although commercial vendors have recognized the problems related to convergence and divergence of event data, there is no real support for concepts
comparable to artifact-centric models, Object-Centric Behavioral Constraint (OCBC) models, and Multiple ViewPoint (MVP) models.
Yet, there are a few initial attempts implemented in commercial systems.
An example is \emph{Celonis}, which supports the use of a secondary case identifier to avoid ``Spaghetti-like'' models where concurrency between sub-instances is translated into loops.
The directly-follows graphs in Celonis do not consider interactions between sub-instances, thus producing simpler models.
Another example is the multi-level discovery technique supported by \emph{myInvenio}.
The resulting models can be seen as simplified MVP models where different activities may correspond to different case notions (but one case notion per activity).
The problem of this approach is that, in reality, the same event may refer to multiple case notions and choosing one is often misleading, especially since it influences the frequencies shown in the diagram.

In spite of the recent progress in process mining, problems related to multiple interacting process instances  have not been solved adequately.
One of the problems is the lack of standardized event data that goes beyond the ``flattened'' event data found in XES.
Hence, process mining competitions tend to focus on classical event logs.
In earlier papers \cite{wvda-keynote-SEFM2019,Alessandro-StarStar-SIMPDA-CEUR2018,Alessandro-MVP-SIMPDA-ext2019}, we already stressed the need for object-centric process mining.
In this paper, we provided a concrete, but at the same time generic, discovery approach to learning object-centric Petri nets from object-centric events logs.

\section{Conclusion}
\label{sec:concl}

When looking at data models or database schemas, there are often one-to-many and many-to-many relations between different types of objects relevant for a process.
Since mainstream process modeling and process mining approaches enforce the use of a specific case notion, the modeler or analyst is forced to select a specific perspective.
This problem can be partly addressed by extracting multiple event logs to cover the different case notions and considering 
one model per case notion.
It would be better to have one, more holistic, process model that is showing the interactions between the different types of objects.
Moreover, the need to pick one or more specific case notions for analysis leads to the divergence and convergence problems discussed in this paper.

Therefore, this paper uses \emph{object-centric event logs} as a representation in between the actual data in the information system and traditional event logs (e.g., based on XES).
Object-centric event logs do not depend on a case notion. Instead, events may refer to arbitrary sets of objects.
One event may refer to multiple objects of different types. Next to using a different input format, we also use a different target language: \emph{object-centric Petri nets}.
These nets are a restricted variant of colored Petri nets where places are typed, tokens refer to objects, and transitions correspond to activities.
Unlike other mainstream notations, a transition can consume and produce a variable number of objects of different types.
We presented a concrete, but also generic, approach to discover object-centric Petri nets from object-centric event logs.
The approach has been implemented in \emph{PM4Py} and various applications show that the approach provides novel insights and is highly scalable (linear in the number of objects, object types, events, and activities).
Therefore, the ideas are directly implementable in commercial tools and 
the existing software can be used to analyze real-life event data in larger organizations.

This is the first paper that aims to learn object-centric Petri nets from object-centric event logs. Our findings show lots of opportunities for further research.
These include:
\begin{itemize}
  \item We aim to develop conformance checking techniques based on object-centric Petri nets and object-centric event logs. Next to checking whether the event log can be replayed on the process  model, it is interesting to detect outliers using the cardinalities. In the current implementation, we already report missing and remaining tokens, but these are based on the flattened event logs.
  \item The approach presented is generic and can embed different process discovery algorithms independently working on flattened events logs (inductive miner, region-based techniques, etc.). The results are then folded into object-centric Petri nets.  It is interesting to compare the different approaches and develop more integrated approaches (e.g., first discover a process model for one object type and then iteratively add the other object types). Moreover, it would be good to have dedicated quality measures (e.g., complexity and precision).
  \item Object-centric Petri nets in their current form can be seen as ``over-approximations'' of the actual behavior. It is interesting to think of ways to make the model more precise (e.g., automatically detecting guards or relating splits and joins). For example, in Figure~\ref{f-example-pm-3}, transition \emph{marked as complete} should join the same set of objects earlier involved in an occurrence of transition \emph{place order}. Similarly, we would like to add stochastics to the model (e.g., a probability distribution for the number of items in an order).
   \item The current object-centric event logs only contain object identifiers and no properties of objects. If an object identifier refers to a patient or customer, 
  we do not know her age, weight, address, income, etc. If an object identifier refers to an order or machine, we do not know its value, The absence of object attributes automatically leads to the ``over-approximations'' mentioned. Hence, we are developing extended object-centric event logs.
  \item We also plan to investigate more sophisticated forms of performance analysis that go beyond adding timing a frequency diagnostics to transition, places, and arcs. How do the different object types influence each other? Next to analyzing the interactions between objects, we would like to better support the link to CPN Tools for ``what if'' analysis (e.g., replaying the event log on an improved process).
  \item We also aim to create a comprehensive, publicly available, set of object-centric event logs.
\end{itemize}

~\\
{\bf Acknowledgments}: We thank the Alexander von Humboldt (AvH) Stiftung for supporting our research.

\bibliographystyle{plain}
\bibliography{lit}

\begin{thebibliography}{10}

\bibitem{aaljcsc}
{W.M.P. van der} Aalst.
\newblock {The Application of Petri Nets to Workflow Management}.
\newblock {\em The Journal of Circuits, Systems and Computers}, 8(1):21--66,
  1998.

\bibitem{process-mining-book-2016}
{W.M.P. van der} Aalst.
\newblock {\em {Process Mining: Data Science in Action}}.
\newblock Springer-Verlag, Berlin, 2016.

\bibitem{centeris-keynote2019}
{W.M.P. van der} Aalst.
\newblock {A Practitioner's Guide to Process Mining: Limitations of the
  Directly-Follows Graph}.
\newblock In {\em {International Conference on Enterprise Information Systems
  (Centeris 2019)}}, volume 164 of {\em Procedia Computer Science}, pages
  321--328. Elsevier, 2019.

\bibitem{wvda-keynote-SEFM2019}
{W.M.P. van der} Aalst.
\newblock {Object-Centric Process Mining: Dealing With Divergence and
  Convergence in Event Data}.
\newblock In P.C. {\"{O}}lveczky and G.~Sala{\"{u}}n, editors, {\em Software
  Engineering and Formal Methods (SEFM 2019)}, volume 11724 of {\em Lecture
  Notes in Computer Science}, pages 3--25. Springer-Verlag, Berlin, 2019.

\bibitem{DBLP:conf/dlog/AalstAMT17}
{W.M.P. van der} Aalst, A.~Artale, M.~Montali, and S.~Tritini.
\newblock {Object-Centric Behavioral Constraints: Integrating Data and
  Declarative Process Modelling}.
\newblock In {\em Proceedings of the 30th International Workshop on Description
  Logics (DL 2017)}, volume 1879 of {\em {CEUR} Workshop Proceedings}.
  CEUR-WS.org, 2017.

\bibitem{aalprocletscoopis}
{W.M.P. van der} Aalst, P.~Barthelmess, C.A. Ellis, and J.~Wainer.
\newblock {Workflow Modeling using Proclets}.
\newblock In O.~Etzion and P.~Scheuermann, editors, {\em {7th International
  Conference on Cooperative Information Systems (CoopIS 2000)}}, volume 1901 of
  {\em Lecture Notes in Computer Science}, pages 198--209. Springer-Verlag,
  Berlin, 2000.

\bibitem{aalprocletsjcis}
{W.M.P. van der} Aalst, P.~Barthelmess, C.A. Ellis, and J.~Wainer.
\newblock {Proclets: A Framework for Lightweight Interacting Workflow
  Processes}.
\newblock {\em International Journal of Cooperative Information Systems},
  10(4):443--482, 2001.

\bibitem{soundness-FACS}
{W.M.P. van der} Aalst, {K.M. van} Hee, {A.H.M. ter} Hofstede, N.~Sidorova,
  H.M.W. Verbeek, M.~Voorhoeve, and M.T. Wynn.
\newblock {Soundness of Workflow Nets: Classification, Decidability, and
  Analysis}.
\newblock {\em Formal Aspects of Computing}, 23(3):333--363, 2011.

\bibitem{aal_patterns_dapd}
{W.M.P. van der} Aalst, {A.H.M. ter} Hofstede, B.~Kiepuszewski, and A.P.
  Barros.
\newblock {Workflow Patterns}.
\newblock {\em Distributed and Parallel Databases}, 14(1):5--51, 2003.

\bibitem{DBLP:journals/corr/AalstLM17}
{W.M.P. van der} Aalst, G.~Li, and M.~Montali.
\newblock {Object-Centric Behavioral Constraints}.
\newblock {\em CoRR}, abs/1703.05740, 2017.

\bibitem{declareCSRD09}
{W.M.P. van der} Aalst, M.~Pesic, and H.~Schonenberg.
\newblock {Declarative Workflows: Balancing Between Flexibility and Support}.
\newblock {\em Computer Science - Research and Development}, 23(2):99--113,
  2009.

\bibitem{mbp-aal-stahl-2011}
{W.M.P. van der} Aalst and C.~Stahl.
\newblock {\em {Modeling Business Processes: A Petri Net Oriented Approach}}.
\newblock MIT Press, Cambridge, MA, 2011.

\bibitem{OCBC-festschrift-Guarino}
A.~Artale, D.~Calvanese, M.~Montali, and {W.M.P. van der} Aalst.
\newblock {Enriching Data Models with Behavioral Constraints}.
\newblock In S.~Borgo, editor, {\em {Ontology Makes Sense (Essays in honor of
  Nicola Guarino)}}, pages 257--277. IOS Press, 2019.

\bibitem{split-miner}
A.~Augusto, R.~Conforti, M.~Marlon, M.~{La Rosa}, and A.~Polyvyanyy.
\newblock {Split Miner: Automated Discovery of Accurate and Simple Business
  Process Models from Event Logs}.
\newblock {\em Knowledge Information Systems}, 59(2):251--284, May 2019.

\bibitem{Badouel-Darondeau-book-regions-2015}
E.~Badouel, L.~Bernardinello, and P.~Darondeau.
\newblock {\em {Petri Net Synthesis}}.
\newblock Texts in Theoretical Computer Science. An {EATCS} Series.
  Springer-Verlag, Berlin, 2015.

\bibitem{BaDa98}
E.~Badouel and P.~Darondeau.
\newblock {Theory of Regions}.
\newblock In W.~Reisig and G.~Rozenberg, editors, {\em Lectures on Petri Nets
  I: Basic Models}, volume 1491 of {\em Lecture Notes in Computer Science},
  pages 529--586. Springer-Verlag, Berlin, 1998.

\bibitem{lorenz_BPM2007}
R.~Bergenthum, J.~Desel, R.~Lorenz, and S.~Mauser.
\newblock {Process Mining Based on Regions of Languages}.
\newblock In G.~Alonso, P.~Dadam, and M.~Rosemann, editors, {\em International
  Conference on Business Process Management (BPM 2007)}, volume 4714 of {\em
  Lecture Notes in Computer Science}, pages 375--383. Springer-Verlag, Berlin,
  2007.

\bibitem{DBLP:journals/fuin/BergenthumDLM08}
R.~Bergenthum, J.~Desel, R.~Lorenz, and S.~Mauser.
\newblock {Synthesis of Petri Nets from Finite Partial Languages}.
\newblock {\em Fundamenta Informaticae}, 88(4):437--468, 2008.

\bibitem{DBLP:conf/apn/BergenthumDLM08}
R.~Bergenthum, J.~Desel, R.~Lorenz, and S.~Mauser.
\newblock {Synthesis of Petri Nets from Scenarios with VipTool}.
\newblock In {\em Applications and Theory of Petri Nets (Petri Nets 2008)},
  volume 5062 of {\em Lecture Notes in Computer Science}, pages 388--398.
  Springer-Verlag, Berlin, 2008.

\bibitem{DBLP:journals/fuin/BergenthumDML09}
R.~Bergenthum, J.~Desel, S.~Mauser, and R.~Lorenz.
\newblock {Synthesis of Petri Nets from Term Based Representations of Infinite
  Partial Languages}.
\newblock {\em Fundamenta Informaticae}, 95(1):187--217, 2009.

\bibitem{Alessandro-StarStar-SIMPDA-CEUR2018}
A.~Berti and {W.M.P. van der} Aalst.
\newblock {StarStar Models: Using Events at Database Level for Process
  Analysis}.
\newblock In P.~Ceravolo, {M. van} Keulen, and M.T.~Gomez Lopez, editors, {\em
  {International Symposium on Data-driven Process Discovery and Analysis
  (SIMPDA 2018)}}, volume 2270 of {\em CEUR Workshop Proceedings}, pages
  60--64. CEUR-WS.org, 2018.

\bibitem{Alessandro-token-based-replay-ATAED19}
A.~Berti and {W.M.P. van der} Aalst.
\newblock {Reviving Token-based Replay: Increasing Speed While Improving
  Diagnostics}.
\newblock In {\em Proceedings of the International Workshop on Algorithms and
  Theories for the Analysis of Event Data (ATAED 2019)}, volume 2371 of {\em
  {CEUR} Workshop Proceedings}, pages 87--103. CEUR-WS.org, 2019.

\bibitem{Alessandro-MVP-SIMPDA-ext2019}
A.~Berti and {W.M.P. van der} Aalst.
\newblock {Discovering Multiple Viewpoint Models from Relational Databases}.
\newblock In P.~Ceravolo, {M. van} Keulen, and M.T.~Gomez Lopez, editors, {\em
  {Postproceedings International Symposium on Data-driven Process Discovery and
  Analysis}}, volume 379 of {\em Lecture Notes in Business Information
  Processing}, pages 24--51. Springer-Verlag, Berlin, 2020.

\bibitem{Artifact-Centric_BPM2007}
K.~Bhattacharya, C.~Gerede, R.~Hull, R.~Liu, and J.~Su.
\newblock {Towards Formal Analysis of Artifact-Centric Business Process
  Models}.
\newblock In G.~Alonso, P.~Dadam, and M.~Rosemann, editors, {\em International
  Conference on Business Process Management (BPM 2007)}, volume 4714 of {\em
  Lecture Notes in Computer Science}, pages 288--304. Springer-Verlag, Berlin,
  2007.

\bibitem{DBLP:conf/bpm/CarmonaCK08}
J.~Carmona, J.~Cortadella, and M.~Kishinevsky.
\newblock {A Region-Based Algorithm for Discovering Petri Nets from Event
  Logs}.
\newblock In {\em Business Process Management (BPM 2008)}, pages 358--373,
  2008.

\bibitem{DBLP:journals/tc/CarmonaCK10}
J.~Carmona, J.~Cortadella, and M.~Kishinevsky.
\newblock {New Region-Based Algorithms for Deriving Bounded Petri Nets}.
\newblock {\em IEEE Transactions on Computers}, 59(3):371--384, 2010.

\bibitem{DBLP:conf/apn/CarmonaCKKLY08}
J.~Carmona, J.~Cortadella, M.~Kishinevsky, A.~Kondratyev, L.~Lavagno, and
  A.~Yakovlev.
\newblock {A Symbolic Algorithm for the Synthesis of Bounded Petri Nets}.
\newblock In {\em Applications and Theory of Petri Nets (Petri Nets 2008)},
  pages 92--111, 2008.

\bibitem{CohnH:2009:business_artifacts}
D.~Cohn and R.~Hull.
\newblock {Business Artifacts: A Data-centric Approach to Modeling Business
  Operations and Processes}.
\newblock {\em IEEE Data Engineering Bulletin}, 32(3):3--9, 2009.

\bibitem{Cortadella98}
J.~Cortadella, M.~Kishinevsky, L.~Lavagno, and A.~Yakovlev.
\newblock {Deriving {Petri} Nets from Finite Transition Systems}.
\newblock {\em IEEE Transactions on Computers}, 47(8):859--882, August 1998.

\bibitem{zero-safe-nets-synthesis}
P.~Darondeau.
\newblock {On the Synthesis of Zero-Safe Nets}.
\newblock In {\em Concurrency, Graphs and Models}, volume 5065 of {\em Lecture
  Notes in Computer Science}, pages 364--378. Springer-Verlag, Berlin, 2008.

\bibitem{eduardo-BPMDS2016}
E.~Gonz{\'{a}}lez~L{\'{o}}pez de~Murillas, H.A. Reijers, and {W.M.P. van der}
  Aalst.
\newblock {Connecting Databases with Process Mining: A Meta Model and Toolset}.
\newblock In R.~Schmidt, W.~Guedria, I.~Bider, and S.~Guerreiro, editors, {\em
  Enterprise, Business-Process and Information Systems Modeling (BPMDS 2015)},
  volume 248 of {\em Lecture Notes in Business Information Processing}, pages
  231--249. Springer-Verlag, Berlin, 2016.

\bibitem{DeRe96}
J.~Desel and W.~Reisig.
\newblock {The Synthesis Problem of Petri Nets}.
\newblock {\em Acta Informatica}, 33(4):297--315, 1996.

\bibitem{boudewijn_runs_ToPNoC-special-issue-PN2011}
{B.F. van} Dongen, J.~Desel, and {W.M.P. van der} Aalst.
\newblock {Aggregating Causal Runs into Workflow Nets}.
\newblock In K.~Jensen, {W.M.P. van der} Aalst, M.~Ajmone Marsan,
  G.~Franceschinis, J.~Kleijn, and L.M. Kristensen, editors, {\em Transactions
  on Petri Nets and Other Models of Concurrency (ToPNoC VI)}, volume 7400 of
  {\em Lecture Notes in Computer Science}, pages 334--363. Springer-Verlag,
  Berlin, 2012.

\bibitem{DBLP:conf/wecwis/EckSA17}
{M.L. van} Eck, N.~Sidorova, and {W.M.P. van der} Aalst.
\newblock {Guided Interaction Exploration in Artifact-centric Process Models}.
\newblock In {\em IEEE Conference on Business Informatics (CBI 2017)}, pages
  109--118. {IEEE} Computer Society, 2017.

\bibitem{ehrenfeucht_regions}
A.~Ehrenfeucht and G.~Rozenberg.
\newblock {Partial (Set) 2-Structures - Part 1 and Part 2}.
\newblock {\em Acta Informatica}, 27(4):315--368, 1989.

\bibitem{dirk-artifact-PN2019}
D.~Fahland.
\newblock {Describing Behavior of Processes with Many-to-Many Interactions}.
\newblock In S.~Donatelli and S.~Haar, editors, {\em {Applications and Theory
  of Petri Nets 2019}}, volume 11522 of {\em Lecture Notes in Computer
  Science}, pages 3--24. Springer-Verlag, Berlin, 2019.

\bibitem{BIS-artifactconformance-lnbip2011}
D.~Fahland, {M. De} Leoni, {B. van} Dongen, and {W.M.P. van der} Aalst.
\newblock {Behavioral Conformance of Artifact-Centric Process Models}.
\newblock In A.~Abramowicz, editor, {\em Business Information Systems (BIS
  2011)}, volume~87 of {\em Lecture Notes in Business Information Processing},
  pages 37--49. Springer-Verlag, Berlin, 2011.

\bibitem{zeus-artifact-2011}
D.~Fahland, {M. De} Leoni, {B. van} Dongen, and {W.M.P. van der} Aalst.
\newblock {Many-to-Many: Some Observations on Interactions in Artifact
  Choreographies}.
\newblock In D.~Eichhorn, A.~Koschmider, and H.~Zhang, editors, {\em
  Proceedings of the 3rd Central-European Workshop on Services and their
  Composition (ZEUS 2011)}, CEUR Workshop Proceedings, pages 9--15.
  CEUR-WS.org, 2011.

\bibitem{flowmarkmanual}
IBM.
\newblock {\em IBM MQSeries Workflow - Getting Started With Buildtime}.
\newblock IBM Deutschland Entwicklung GmbH, Boeblingen, Germany, 1999.

\bibitem{XES-standard-2013}
{IEEE Task Force on Process Mining}.
\newblock {XES Standard Definition}.
\newblock www.xes-standard.org, 2013.

\bibitem{cpnbook-jensen-2009}
K.~Jensen and L.M. Kristensen.
\newblock {\em {Coloured Petri Nets}}.
\newblock Springer-Verlag, Berlin, 2009.

\bibitem{Jetty-TCS-2012}
J.~Kleijn, M.Koutny, and M.~Pietkiewicz-Koutny.
\newblock {Regions of Petri nets with a/sync connections}.
\newblock {\em Theoretical Computer Science}, 454:189--198, 2012.

\bibitem{Jetty-TCS-2017}
J.~Kleijn, M.Koutny, M.~Pietkiewicz-Koutny, and G.~Rozenberg.
\newblock {Applying Regions}.
\newblock {\em Theoretical Computer Science}, 658:205--215, 2017.

\bibitem{sander-tree-disc-PN2013}
S.J.J. Leemans, D.~Fahland, and {W.M.P. van der} Aalst.
\newblock {Discovering Block-structured Process Models from Event Logs: A
  Constructive Approach}.
\newblock In J.M. Colom and J.~Desel, editors, {\em {Applications and Theory of
  Petri Nets 2013}}, volume 7927 of {\em Lecture Notes in Computer Science},
  pages 311--329. Springer-Verlag, Berlin, 2013.

\bibitem{sander-infreq-bpi2013-lnbip2014}
S.J.J. Leemans, D.~Fahland, and {W.M.P. van der} Aalst.
\newblock {Discovering Block-Structured Process Models from Event Logs
  Containing Infrequent Behaviour}.
\newblock In N.~Lohmann, M.~Song, and P.~Wohed, editors, {\em {Business Process
  Management Workshops, International Workshop on Business Process Intelligence
  (BPI 2013)}}, volume 171 of {\em Lecture Notes in Business Information
  Processing}, pages 66--78. Springer-Verlag, Berlin, 2014.

\bibitem{sander-scalable-BPMDS2015}
S.J.J. Leemans, D.~Fahland, and {W.M.P. van der} Aalst.
\newblock {Scalable Process Discovery with Guarantees}.
\newblock In K.~Gaaloul, R.~Schmidt, S.~Nurcan, S.~Guerreiro, and Q.~Ma,
  editors, {\em Enterprise, Business-Process and Information Systems Modeling
  (BPMDS 2015)}, volume 214 of {\em Lecture Notes in Business Information
  Processing}, pages 85--101. Springer-Verlag, Berlin, 2015.

\bibitem{sander-scalable-procmin-SOSYM}
S.J.J. Leemans, D.~Fahland, and {W.M.P. van der} Aalst.
\newblock {Scalable Process Discovery and Conformance Checking}.
\newblock {\em Software and Systems Modeling}, 17(2):599--631, 2018.

\bibitem{Extract-Object-Centric-CF-2018}
G.~Li, E.~Gonz{\'{a}}lez~L{\'{o}}pez de~Murillas, R.~Medeiros de~Carvalho, and
  {W.M.P. van der} Aalst.
\newblock {Extracting Object-Centric Event Logs to Support Process Mining on
  Databases}.
\newblock In J.~Mendling and H.~Mouratidis, editors, {\em {Information Systems
  in the Big Data Era, CAiSE Forum 2018}}, volume 317 of {\em Lecture Notes in
  Business Information Processing}, pages 182--199. Springer-Verlag, Berlin,
  2018.

\bibitem{BIS-OCBCdisc-lnbip2017}
G.~Li, R.~{Medeiros de Carvalho}, and {W.M.P. van der} Aalst.
\newblock {Automatic Discovery of Object-Centric Behavioral Constraint Models}.
\newblock In W.~Abramowicz, editor, {\em Business Information Systems (BIS
  2017)}, volume 288 of {\em Lecture Notes in Business Information Processing},
  pages 43--58. Springer-Verlag, Berlin, 2017.

\bibitem{compliance-artifacts-bpm2011}
N.~Lohmann.
\newblock {Compliance by Design for Artifact-Centric Business Processes}.
\newblock In S.~Rinderle, F.~Toumani, and K.~Wolf, editors, {\em {Business
  Process Management (BPM 2011)}}, volume 6896 of {\em Lecture Notes in
  Computer Science}, pages 99--115. Springer-Verlag, Berlin, 2011.

\bibitem{Lorenz_ACSD_07}
R.~Lorenz, R.~Bergenthum, J.~Desel, and S.~Mauser.
\newblock {Synthesis of Petri Nets from Finite Partial Languages}.
\newblock In T.~Basten, G.~Juh{\'a}s, and S.K. Shukla, editors, {\em
  International Conference on Application of Concurrency to System Design (ACSD
  2007)}, pages 157--166. IEEE Computer Society, 2007.

\bibitem{models-from-scenarios-ToPNoC}
R.~Lorenz, J.~Desel, and G.~Juhas.
\newblock {Models from Scenarios}.
\newblock In K.~Jensen, {W.M.P. van der} Aalst, G.~Balbo, M.~Koutny, and
  K.~Wolf, editors, {\em Transactions on Petri Nets and Other Models of
  Concurrency (ToPNoC VII)}, volume 7480 of {\em Lecture Notes in Computer
  Science}, pages 314--371. Springer-Verlag, Berlin, 2013.

\bibitem{lorenz_atpn_2006}
R.~Lorenz and G.~Juhas.
\newblock {Towards Synthesis of Petri Nets from Scenarios}.
\newblock In S.~Donatelli and P.S. Thiagarajan, editors, {\em {Application and
  Theory of Petri Nets 2006}}, volume 4024 of {\em Lecture Notes in Computer
  Science}, pages 302--321. Springer-Verlag, Berlin, 2006.

\bibitem{Lorenz_WSC_07}
R.~Lorenz and G.~Juh{\'a}s.
\newblock {How to Synthesize Nets from Languages: A Survey}.
\newblock In S.G. Henderson, B.~Biller, M.~Hsieh, J.~Shortle, J.~D. Tew, and
  R.~R. Barton, editors, {\em Proceedings of the Wintersimulation Conference
  (WSC 2007)}, pages 637--647. IEEE Computer Society, 2007.

\bibitem{Xixi-Conf-Check-bpi2014-lnbip2015}
X.~Lu, D.~Fahland, and {W.M.P. van der} Aalst.
\newblock {Conformance Checking Based on Partially Ordered Event Data}.
\newblock In F.~Fournier and J.~Mendling, editors, {\em {Business Process
  Management Workshops, International Workshop on Business Process Intelligence
  (BPI 2014)}}, volume 202 of {\em Lecture Notes in Business Information
  Processing}, pages 75--88. Springer-Verlag, Berlin, 2015.

\bibitem{Xixi-TSC-2015}
X.~Lu, M.~Nagelkerke, {D. van de} Wiel, and D.~Fahland.
\newblock {Discovering Interacting Artifacts from ERP Systems}.
\newblock {\em IEEE Transactions on Services Computing}, 8(6):861--873, 2015.

\bibitem{NigamC:2003:artifacts}
A.~Nigam and N.S. Caswell.
\newblock {Business artifacts: An Approach to Operational Specification}.
\newblock {\em IBM Systems Journal}, 42(3):428--445, 2003.

\bibitem{BPMN-OMG-2-formal}
OMG.
\newblock {Business Process Model and Notation (BPMN)}.
\newblock Object Management Group, formal/2011-01-03, 2011.

\bibitem{anne_confcheck_is}
A.~Rozinat and {W.M.P. van der} Aalst.
\newblock {Conformance Checking of Processes Based on Monitoring Real
  Behavior}.
\newblock {\em Information Systems}, 33(1):64--95, 2008.

\bibitem{scheer94}
A.W. Scheer.
\newblock {\em {Business Process Engineering: Reference Models for Industrial
  Enterprises}}.
\newblock Springer-Verlag, Berlin, 1994.

\bibitem{carmona-PN2010}
M.~Sol{\'e} and J.~Carmona.
\newblock {Process Mining from a Basis of State Regions}.
\newblock In J.~Lilius and W.~Penczek, editors, {\em {Applications and Theory
  of Petri Nets 2010}}, volume 6128 of {\em Lecture Notes in Computer Science},
  pages 226--245. Springer-Verlag, Berlin, 2010.

\bibitem{Multi-instance-Mining-BPM-WS-2018}
M.L. van Eck, N.~Sidorova, and {W.M.P. van der} Aalst.
\newblock {Multi-instance Mining: Discovering Synchronisation in
  Artifact-Centric Processes}.
\newblock In F.~Daniel, Q.Z. Sheng, and H.~Motahari, editors, {\em Business
  Process Management Workshops, International Workshop on Business Process
  Intelligence (BPI 2018)}, volume 342 of {\em Lecture Notes in Business
  Information Processing}, pages 18--30. Springer-Verlag, Berlin, 2018.

\bibitem{bas-ilp-computing}
{S.J. van} Zelst, {B.F. van} Dongen, {W.M.P. van der} Aalst, and H.M.W Verbeek.
\newblock {Discovering Workflow Nets Using Integer Linear Programming}.
\newblock {\em Computing}, 100(5):529--556, 2018.

\end{thebibliography}

\end{document}